\newcommand{\old}[1]{{}} 
\newtheorem{observation}[theorem]{Observation}
\newtheorem{problem}[definition]{Problem}
\newcommand{\N}{{{\cal{N}}}}
\newcommand{\A}{{{\cal{A}}}}
\renewcommand{\P}{{{\cal{P}}}}
\newcommand{\C}{{{\cal{C}}}}
\newcommand{\kdom}{{{$k$-$Dom_V (B)$}}}
\title{A Linear-Time Algorithm for Minimum $k$-Hop Dominating Set of a Cactus Graph} 
\titlerunning{A Linear-Time Algorithm for Minimum $k$-Hop Dominating Set of a Cactus Graph}
\author{A. Karim Abu-Affash}{Software Engineering Department, Shamoon College of Engineering, Beer-Sheva 84100, Israel}{abuaa1@sce.ac.il}{}{}
\author{Paz Carmi}{Department of Computer Science, Ben-Gurion University, Beer-Sheva 84105, Israel}{carmip@cs.bgu.ac.il}{}{}
\author{Adi Krasin}{Department of Computer Science, Ben-Gurion University, Beer-Sheva 84105, Israel}{adikra@post.bgu.ac.il}{}{}
\authorrunning{A.\,K. Abu-Affash, P. Carmi and A. Krasin}
\keywords{Dominating set, cactus graph, unicyclic graph, piercing circular arcs}
\begin{document}

\maketitle

\begin{abstract}
Given a graph $G=(V,E)$ and an integer $k \ge 1$, a $k$-\emph{hop dominating set} $D$ of $G$ is a subset of $V$, such that, for every vertex $v \in V$, there exists a node $u \in D$ whose hop-distance from $v$ is at most $k$. A $k$-hop dominating set of minimum cardinality is called a \emph{minimum $k$-hop dominating set}. In this paper, we present linear-time algorithms that find a minimum $k$-hop dominating set in \emph{unicyclic and cactus graphs.}  To achieve this, we show
that the $k$-dominating set problem on unicycle graph reduces to the \emph{piercing circular arcs problem}, and show 
a linear-time algorithm for piercing sorted circular arcs, 
which improves the best known $O(n\log n)$-time algorithm.  

\end{abstract}


\section{Introduction}
A \emph{dominating set} of an undirected graph $G=(V,E)$ is a subset $D \subseteq V$, such that every vertex in $V\setminus D$ is adjacent to at least one vertex in $D$. For arbitrary graphs, the problem of computing a minimum dominating set is NP-complete~\cite{Garey79}. Since the problem is a special case of minimum set cover, it can be approximated within $1+\log{|V|}$~\cite{Johnson74} but it cannot be approximated within $(1-\varepsilon)\log{|V|}$, for any $\varepsilon>0$~\cite{Raz97}. The problem admits a PTAS on planar graphs~\cite{Baker94} and on unit disk graphs~\cite{Hunt98,Nieberg05}. For trees, cactus, and permutation graphs, the problem can be solved in linear time~\cite{Chao00,Cockayne75,Hedetniemi86}. 

Given a graph $G=(V,E)$ on $n$ vertices and an integer $k \ge 1$, a $k$-\emph{hop dominating set} of $G$ is a subset $D \subseteq V$, such that the hop distance from every vertex in $V\setminus D$ is at most $k$ from at least one vertex of $D$. 
Recently, several works devoted to the problem of finding a minimum $k$-hop dominating set in graphs.
The problem has been shown to be NP-complete~\cite{Amis00,Basuchowdhuri14} in general graphs. 
Demaine et al.~\cite{Demaine05} gave an $O(n^4)$-time (fixed-parameter) algorithm for the problem on planar and map graphs.
Kundu and Majumder~\cite{Kundu16} and Barman et al.~\cite{Barman19} presented an optimal linear-time algorithm for the problem on trees and interval graphs, respectively.
The $k$-\emph{dominating number} of a graph $G$, denoted by $\gamma_k(G)$, is the cardinality of a minimum $k$-\emph{hop dominating set} of $G$. The $k$-dominating number has been widely studied in the literature; see for example~\cite{Davila17,Hansberg07,Henning17,Meierling05,Tian05}.

Applications on dominating and $k$-hop dominating sets in graphs are known in several areas such as wireless networks~\cite{Amis00} and social networks~\cite{Basuchowdhuri14}.

Let $G=(V,E)$ be an undirected graph. $G$ is called a \emph{unicyclic} graph if it contains exactly one cycle. 
A vertex $u \in V$ is called a \emph{cut vertex} if removing $u$ from $G$ increases the number of components of $G$. A \emph{block} of $G$ is a maximal connected subgraph of $G$ without any cut vertex. $G$ is called a \emph{cactus} graph if its blocks are either edges or cycles. 

In \cite{LanC13}, Lan and Chang considered a different variant of $k$-dominating set on unicyclic and cactus graphs, where the goal is to find a minimum dominating set $D$, such that each vertex is dominated by at least $k$ vertices from $D$. 

In this paper, a linear-time algorithm is presented, which computes a minimum piercing set for sorted circular arcs, this improves the current $O(n\log n)$-time algorithm by Katz at al.~\cite{Katz03}. 
We then show how to utilize this result to obtain linear-time algorithms to the minimum $k$-hop dominating set in unicyclic and cactus graphs.


\section{Preliminaries} \label{sec:preli}

Let $G=(V,E)$ be an undirected graph.
For two vertices $u,v \in V$, let $d_G(u,v)$ denote the hop-distance between $u$ and $v$ in $G$.
For a vertex $u \in V$, let $\N_i(u) \subset V$ denote the $i$-neighborhood of $u$, i.e., 
$\N_i(u) = \{v \in V : d_G(u,v) \le i\}$.
For a subset $D \subseteq V$, let $\N_i(D) = \bigcup_{u\in D} \N_i(u)$.
For a subset $D \subseteq V$ and a vertex $u \in V$, let $\delta_D(u)$ denote the distance of $u$ from $D$ (i.e., the distance of $u$ to the closest vertex in $D$). 
A $k$-hop dominating set of $G$ is a subset $D \subseteq V$, such that $\delta_D(u) \le k$, for each vertex $u \in V$.
Throughout the rest of the paper, when we say distance we refer to hop-distance.

\begin{problem}[kHDS]
Given an undirected graph $G=(V,E)$ and an integer $k \ge 1$, the goal in the k-Hop Dominating Set problem is to compute a minimum k-hop dominating set of $G$.
\end{problem}

\subsection{\emph{kHDS} on trees} \label{sec:tree}

Let $T=(V,E)$ be a tree on $n$ vertices. In~\cite{Kundu16}, Kundu and Majumder presented a linear-time algorithm that finds a minimum $k$-hop dominating set of $T$.
For completeness, based on their algorithm, we give a simpler (linear-time) implementation of the algorithm that will be used as a black-box throughout the rest of the paper.

Let $T_r$ be the rooted form of $T$ with an arbitrary vertex $r \in V$. For each vertex $u \in V$, let $T_u$ be the subtree of $T_r$ rooted at $u$ and let $h(T_u)$ denote the height of $T_u$, i.e., 
$h(T_u) = \max_{v \in T_u} d_{T_u}(u,v)$. Let $\pi(u)$ denote the parent of $u$ in $T_r$ and let $child(u)$ denote the set of children of $u$ in $T_u$.   

\begin{observation} \label{obs:heightK}
If $h(T_r) \le k$, then $D=\{r\}$ is a minimum $k$-hop dominating set of $T_r$.
\end{observation}

\begin{observation} \label{obs:easyLife}
There exists a minimum $k$-hop dominating set of $T_r$ that does not contain any leaf of $T_r$.
\end{observation}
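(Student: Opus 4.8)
The plan is to use a standard exchange (local replacement) argument: start from any minimum $k$-hop dominating set and show that a leaf lying in it can always be swapped for its parent without increasing the cardinality or destroying the domination property, so a leaf-free optimum exists. Concretely, I would fix a minimum $k$-hop dominating set $D$ of $T_r$ that, among all minimum $k$-hop dominating sets, contains the fewest leaves of $T_r$; this avoids an explicit iteration. (The degenerate case $|V| = 1$ is trivial and can be dismissed in one line, so assume $T_r$ has at least two vertices, and hence $child(r) \ne \emptyset$, so the root is not a leaf and every leaf $\ell$ has a well-defined parent $\pi(\ell)$.)

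Assume for contradiction that $D$ still contains a leaf $\ell$, and set $D' = (D \setminus \{\ell\}) \cup \{\pi(\ell)\}$. The bound $|D'| \le |D|$ is immediate. To see that $D'$ is again a $k$-hop dominating set, I would use that $\ell$ is a leaf: the unique $\ell$-$v$ path in $T_r$ passes through $\pi(\ell)$ for every $v \ne \ell$, so $d_{T_r}(v, \pi(\ell)) = d_{T_r}(v, \ell) - 1 \le d_{T_r}(v,\ell)$. Hence every vertex that was within distance $k$ of $\ell$ is within distance $k$ of $\pi(\ell) \in D'$, vertices dominated by the other members of $D$ are unaffected since those members remain in $D'$, and finally $\delta_{D'}(\ell) \le d_{T_r}(\ell,\pi(\ell)) = 1 \le k$. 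Thus $\delta_{D'}(v) \le k$ for all $v \in V$.

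Now $|D'| \le |D| = \gamma_k(T_r)$ forces $|D'| = \gamma_k(T_r)$, so $D'$ is also a minimum $k$-hop dominating set. But the leaves contained in $D'$ form a proper subset of those contained in $D$: the leaf $\ell$ has been removed, and the only new element $\pi(\ell)$ is not a leaf because $\ell \in child(\pi(\ell))$. This contradicts the choice of $D$, so $D$ contains no leaf, proving the claim. There is no real obstacle here; the only point requiring care is confirming that the replacement is strictly progress — namely that $\pi(\ell) \ne \ell$ and that $\pi(\ell)$ is never itself a leaf — together with ruling out the one-vertex tree by inspection.
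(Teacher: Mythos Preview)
Your exchange argument is correct and is exactly the standard justification for this fact; the paper states the observation without proof, treating the swap $\ell \mapsto \pi(\ell)$ as self-evident (the same replacement is what underlies the subsequent Lemma~\ref{lemma:tree1}), so your approach matches the intended reasoning.

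One small quibble: you write that the case $|V|=1$ is ``trivial and can be dismissed in one line,'' but under the convention that a childless root is a leaf the single vertex must lie in every dominating set, so the observation is literally false there. The paper is tacitly assuming $|V|\ge 2$ (equivalently, that the root is never counted as a leaf); you should say this explicitly rather than wave the case away as though it confirms the claim.
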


Let $B$ be a subset of $V$.
A \emph{partial} $k$-hop dominating set of $B$, denoted by \kdom, is a subset $D$ of $V$, such that 
$B \subseteq \N_k(D)$, i.e., for each vertex $v \in B$, there exists a vertex $u \in D$, such that 
$d_{T_r}(u,v) \le k$.
The partial $k$-hop dominating number, denoted by $\Gamma_V (B)$, is the cardinality of a minimum \kdom. 
%
The following lemma follows from Observation~\ref{obs:easyLife}.
\begin{lemma} \label{lemma:tree1}
  If $T_r$ contains a leaf $v \notin B$, then $\Gamma_V (B) = \Gamma_{V \setminus\{v\}} (B)$.
\end{lemma}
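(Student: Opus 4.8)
The plan is to prove the two inequalities $\Gamma_V(B) \le \Gamma_{V\setminus\{v\}}(B)$ and $\Gamma_{V\setminus\{v\}}(B) \le \Gamma_V(B)$ separately; together they give the claimed equality. The first inequality is immediate: any partial $k$-hop dominating set of $B$ contained in $V\setminus\{v\}$ is, in particular, a subset of $V$, hence a \kdom, so the minimum over the smaller ground set is at least the minimum over $V$.

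For the second, non-trivial inequality I would argue by an exchange argument of the same flavour as the proof of Observation~\ref{obs:easyLife}. Take a minimum \kdom $D$, so $|D| = \Gamma_V(B)$ and $B \subseteq \N_k(D)$. If $v \notin D$, then $D \subseteq V\setminus\{v\}$ and it witnesses $\Gamma_{V\setminus\{v\}}(B) \le |D|$. Otherwise $v \in D$; let $u$ be the unique neighbour of $v$ in $T_r$ (this is $\pi(v)$ when $v \neq r$, and the single child of $v$ when $v = r$ happens to be a leaf). Set $D' = (D\setminus\{v\}) \cup \{u\}$. Since $u \neq v$ we have $D' \subseteq V\setminus\{v\}$ and $|D'| \le |D|$. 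It remains to check $B \subseteq \N_k(D')$: for $w \in B$, if $w$ was already within distance $k$ of some vertex of $D\setminus\{v\}$ we are done; otherwise $d_{T_r}(v,w) \le k$, and because $v$ is a leaf and $w \ne v$ (as $v \notin B$), every $v$--$w$ path passes through $u$, so $d_{T_r}(u,w) = d_{T_r}(v,w) - 1 \le k-1 \le k$. Hence $D'$ is a \kdom contained in $V\setminus\{v\}$, giving $\Gamma_{V\setminus\{v\}}(B) \le |D'| \le |D| = \Gamma_V(B)$.

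The two inequalities combine to $\Gamma_V(B) = \Gamma_{V\setminus\{v\}}(B)$. The only point requiring a little care is the degenerate case in which the leaf $v$ is the root $r$, where ``parent'' must be read as ``the unique neighbour''; with that reading the exchange step goes through verbatim, since the argument only uses that a leaf has a single neighbour and that $v \notin B$. I do not expect any real obstacle here — the lemma is a routine consequence of the leaf-avoidance idea already recorded in Observation~\ref{obs:easyLife}.
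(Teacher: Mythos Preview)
Your proof is correct and follows essentially the same two-inequality scheme as the paper: the easy direction is containment of ground sets, and the other direction is the exchange argument replacing the leaf $v$ by its parent $\pi(v)$ (your ``unique neighbour''), using $v\notin B$ to ensure every vertex of $B$ dominated by $v$ is at least as close to $\pi(v)$. If anything you are slightly more explicit than the paper in justifying why $d_{T_r}(u,w)\le k$ after the swap.
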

\old{
\begin{proof}
Since $v$ is a leaf in $T_r$, any minimum $k$-$Dom_{V \setminus\{v\}} (B)$ is also a $k$-$Dom_V (B)$. Therefore, $\Gamma_V (B) \le \Gamma_{V \setminus\{v\}} (B)$.
	On the other hand, let $D$ be a minimum $k$-$Dom_V (B)$. If $v \notin D$, then $D$ is also a $k$-$Dom_{V \setminus\{v\}} (B)$. Otherwise, $v \in D$. Then, $D \setminus \{v\} \cup \{\pi(v)\}$ is a $k$-$Dom_{V \setminus\{v\}} (B)$. Therefore, in either cases, $\Gamma_{V \setminus\{v\}} (B) \le |D| = |D \setminus \{v\} \cup \{\pi(v)\}| = \Gamma_V (B)$.
\end{proof}
}
\begin{lemma} \label{lemma:tree2}
Let $T_u$ be a subtree of $T_r$ of height $k$ rooted at $u$. If there exists a leaf $v \in B$ of depth $k$ in $T_u$, 
then $\Gamma_V (B) = \Gamma_{V'} (B') + 1$, 
where $B' = B \setminus \N_k(u)$ and $V'$ is the set obtained from $V$ by removing all vertices of $T_u$.  
\end{lemma}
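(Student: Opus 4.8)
The plan is to establish the two inequalities $\Gamma_V(B) \le \Gamma_{V'}(B')+1$ and $\Gamma_V(B) \ge \Gamma_{V'}(B')+1$ separately, both resting on one structural fact: \emph{for every vertex $w \in T_u$ we have $\N_k(w) \subseteq \N_k(u)$}. Intuitively, since $T_u$ has height exactly $k$, the root $u$ is at least as good a dominator as any vertex inside its own subtree, so we may always push a chosen dominator up to $u$ and then safely detach $T_u$. (If $u=r$ then $T_u=T_r$, $h(T_r)=k$, so $\N_k(r)=V$, hence $B'=\emptyset$ and $V'=\emptyset$; the claim reduces to $\Gamma_V(B)=1$, which follows from Observation~\ref{obs:heightK} since $v\in B$ forces $B\neq\emptyset$. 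So assume $u\neq r$ below; the only place $\pi(u)$ is used is a case that is vacuous when $u=r$ anyway.)

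To prove the structural fact, take $w\in T_u$ and $x\in\N_k(w)$, i.e. $d_{T_r}(w,x)\le k$. If $x\in T_u$, then $d_{T_r}(u,x)\le h(T_u)=k$, so $x\in\N_k(u)$. If $x\notin T_u$, the path from $w$ to $x$ in $T_r$ passes through $u$ and then $\pi(u)$, so $d_{T_r}(w,x)=d_{T_r}(w,u)+d_{T_r}(u,x)$, whence $d_{T_r}(u,x)\le d_{T_r}(w,x)\le k$ and again $x\in\N_k(u)$. In particular $T_u\subseteq\N_k(u)$, so $B\cap T_u\subseteq\N_k(u)$ and therefore $B'=B\setminus\N_k(u)\subseteq V\setminus T_u=V'$, which is what makes $\Gamma_{V'}(B')$ meaningful. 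The easy inequality is then immediate: if $D'$ is a minimum $k$-$Dom_{V'}(B')$, then $D'\cup\{u\}\subseteq V$ and $\N_k(D'\cup\{u\})\supseteq\N_k(D')\cup\N_k(u)\supseteq B'\cup(B\cap\N_k(u))=B$, so $D'\cup\{u\}$ is a $k$-$Dom_V(B)$; since $u\notin V'\supseteq D'$ it has size $\Gamma_{V'}(B')+1$, giving $\Gamma_V(B)\le\Gamma_{V'}(B')+1$.

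For the reverse inequality I would use an exchange argument. Let $D$ be a minimum $k$-$Dom_V(B)$ and write $D_{\mathrm{in}}=D\cap T_u$, $D_{\mathrm{out}}=D\setminus T_u$. Because $v\in B$ has depth exactly $k$ in $T_u$, any vertex at distance $\le k$ from $v$ lies in $T_u$ (a vertex outside $T_u$ is at distance $\ge d_{T_r}(u,v)+1=k+1$ from $v$), so $D_{\mathrm{in}}\neq\emptyset$. By the structural fact, $\N_k(D_{\mathrm{out}}\cup\{u\})=\N_k(D_{\mathrm{out}})\cup\N_k(u)\supseteq\N_k(D_{\mathrm{out}})\cup\N_k(D_{\mathrm{in}})=\N_k(D)\supseteq B$, so $D_{\mathrm{out}}\cup\{u\}$ is also a $k$-$Dom_V(B)$, of size at most $|D|-|D_{\mathrm{in}}|+1\le|D|$; by minimality $|D_{\mathrm{in}}|=1$ and we may assume $D=D_{\mathrm{out}}\cup\{u\}$ with $D_{\mathrm{out}}\subseteq V'$. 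Finally, each $x\in B'$ satisfies $x\notin\N_k(u)$ yet $x\in\N_k(D)$, so $x$ is dominated by some vertex of $D_{\mathrm{out}}$; hence $D_{\mathrm{out}}$ is a $k$-$Dom_{V'}(B')$ of size $|D|-1=\Gamma_V(B)-1$, i.e. $\Gamma_{V'}(B')\le\Gamma_V(B)-1$. Combining the two inequalities gives the claim.

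The step that deserves the most care is the structural fact $\N_k(w)\subseteq\N_k(u)$, and in particular noticing that it relies on $h(T_u)=k$ (were the subtree taller, it would be false). After that the argument is bookkeeping, the only mild traps being that all distances are measured in the fixed tree $T_r$ rather than inside $V'$, that the exchange introduces no new vertex other than $u$, and that one must check $B'\subseteq V'$ so that the right-hand side is well defined.
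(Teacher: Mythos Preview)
Your proof is correct and follows essentially the same approach as the paper's: both directions are handled by showing that $u$ dominates at least as much as any vertex of $T_u$, then exchanging a dominator inside $T_u$ for $u$ and detaching $T_u$. Your version is in fact more careful than the paper's---you explicitly prove the key inclusion $\N_k(w)\subseteq\N_k(u)$, handle the corner case $u=r$, and verify $|D_{\mathrm{in}}|=1$ via minimality (a step the paper glosses over when it removes only a single $w$ without arguing that $D'\setminus\{u\}\subseteq V'$).
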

\begin{proof}
Let $D$ be a minimum $k$-$Dom_{V'} (B')$. Then, by Observation~\ref{obs:heightK}, $D \cup \{u\}$ is a $k$-$Dom_{V} (B)$. Therefore, $\Gamma_V (B) \le |D \cup \{u\}| = \Gamma_{V'} (B') + 1$. 
	On the other hand, since there exists a leaf $v \in B$ of depth $k$ in $T_u$, any minimum $k$-$Dom_{V} (B)$ $D$ must contain at least one vertex $w$ in $T_u$, and $D'=D\setminus \{w\} \cup \{u\}$ is also a minimum $k$-$Dom_{V} (B)$, since each vertex in $B$, that is dominated by $w$, is also dominated by $u$.
	Thus, $D' \setminus \{u\}$ is a $k$-$Dom_{V'} (B')$, since $\N_k(u) \nsubseteq B'$. Therefore, $\Gamma_{V'} (B') \le |D' \setminus \{u\}| = \Gamma_V (B) - 1$.
\end{proof}

The construction and correctness of the algorithm is based on 
Lemma~\ref{lemma:tree1} and Lemma~\ref{lemma:tree2}.
In each iteration, the algorithm finds a subtree $T_u$ of height exactly $k$, adds $u$ to $D$, and removes all the leaves of $T_r$ that are in $\N_k(u)$. 
To implement the algorithm in $O(n)$ time, we use a modified version of the depth first search ($DFS$) algorithm (see Algorithm~\ref{alg:DFS}). 
In each recursive call \emph{ModifiedDFS}$(u)$, we check whether $h(T_u) = k$. If so, we add $u$ to $D$, remove all vertices of $T_u$ from $T_r$ (since they are already dominated by $u$), and return $\delta_D(u)=0$.
Otherwise, we check whether at least one descendant $v$ of $u$ is added to $D$ and $d_T(u,v) + h(T_u) \le k$. If so, we remove all vertices of $T_u$ from $T_r$ and return $\delta_D(u) = \min\{\delta_D(u),d_T(u,v)\}$.
Otherwise, we check again whether $h(T_u) = k$ (in case the descendants reduced the height of $u$ to $k$). If so, we add $u$ to $D$, remove all vertices of $T_u$ from $T_r$, and return $\delta_D(u)=0$. Otherwise ($h(T_u) < k$), we return $\infty$.
\floatname{algorithm}{Algorithm} 
\begin{algorithm}[ht]
\caption{\emph{Solve-kHDS-on-Tree}($T_r$) } \label{alg:tree}

\textbf{Input:} A tree $T_r$ rooted at $r$ \\
\textbf{Output:} A minimum $k$-hop dominating set of $T_r$, $k$-$Dom_{V} (V)$

\begin{algorithmic}[1]

\STATE $D \leftarrow \emptyset$
\STATE \textbf{for} each $u \in V$ \textbf{do} \\
\quad \ compute $h(T_u)$ \\
\quad \ $\delta_D(u) \leftarrow \infty$ 

\STATE $\delta_D(r) \leftarrow$ \emph{ModifiedDFS}$(r)$ \\

\STATE \textbf{if} $\delta_D(r) = \infty$ \textbf{then} \\
\quad \ $D \leftarrow D \cup \{r\}$  \\

\STATE \textbf{return} $D$
	
\end{algorithmic}
\end{algorithm}


\floatname{algorithm}{Procedure}
\begin{algorithm}[htb]
\caption{\emph{ModifiedDFS}($u$) } \label{alg:DFS}
\begin{algorithmic}[1]

\STATE \textbf{if} $h(T_u) = k$ \textbf{then} \\
\quad \ $D \leftarrow D \cup \{u\}$ \\
\quad \ $T_r \leftarrow T_r \setminus T_u$ \\
\quad \ $h(T_u) \leftarrow -1$ \\
\quad \ $\delta_D(u) \leftarrow 0$ \\

\STATE \textbf{else} \\
\quad \ \textbf{for} each $v \in child(u)$ \textbf{do} \\
\quad \ \quad \ \textbf{if} $h(T_v) \ge k$ \textbf{then} \\
\quad \ \quad \ \quad \ $\delta_D(u) \leftarrow \min \{ \delta_D(u), ModifiedDFS(v) + 1 \} $ \\

\quad \ $h(T_u) \leftarrow \max \{ h(T_v) +1 \ : \ v \in child(u) \}$ \ \ \ \ (* updating $h(T_u)$ *) \\
\quad \ \textbf{if} $h(T_u) + \delta_D(u) \le k$ \textbf{then} \\
\quad \ \quad \ $T_r \leftarrow T_r \setminus T_u$  \\
\quad \ \quad \ $h(T_u) \leftarrow -1$ \\

\quad \ \textbf{if} $h(T_u) = k$ \textbf{then} \\
\quad \ \quad \ $D \leftarrow D \cup \{u\}$ \\
\quad \ \quad \ $T_r \leftarrow T_r \setminus T_u$ \\
\quad \ \quad \ $h(T_u) \leftarrow -1$ \\
\quad \ \quad \ $\delta_D(u) \leftarrow 0$ \\
\quad \ \textbf{else} \\
\quad \ \quad \ $\delta_D(u) \leftarrow \infty$ \\

\STATE \textbf{return} $\delta_D(u)$ \\
	
\end{algorithmic}
\end{algorithm}

\old{
\begin{wrapfigure}{R}{0.5\textwidth}
	\vspace{-0.4cm}
  \centering
  \includegraphics[width=0.2\textwidth]{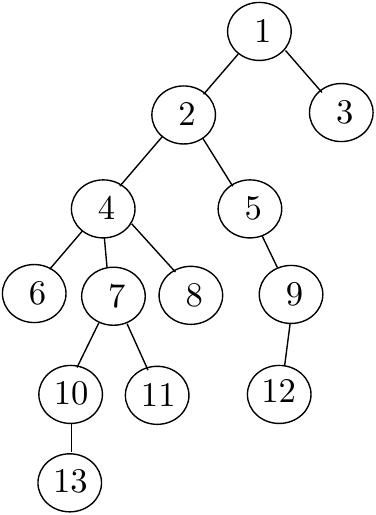}
  \caption{A tree $T_r$ rooted at $r=1$. For $k=2$, Algorithm~\ref{alg:tree} returns $D = \{1,5,7\}$.}
  \label{fig:example}
\end{wrapfigure}
}
\textbf{Example:} Consider the tree $T_r$ rooted at $r=1$ and $k=2$; see Figure~\ref{fig:example}. The first vertex added to $D$ is $7$ in \emph{ModifiedDFS}$(7)$, since $h(T_7)=2$. In this call, we remove $T_7$ from $T_r$, update $h(T_7)=-1$ and return $\delta_D(7)=0$ to \emph{ModifiedDFS}$(4)$. In \emph{ModifiedDFS}$(4)$, we update $\delta_D(4)=1$ and, after traversing vertices 6 and 8, $h(T_4)=1$, and since $h(T_4) + \delta_D(4) =2$, we remove $T_4$ from $T_r$ and return $\delta_D(4)=1$ to \emph{ModifiedDFS}$(2)$. In \emph{ModifiedDFS}$(2)$, since $h(T_2) = 3 > k$, we call \emph{ModifiedDFS}$(5)$ which adds vertex $5$ to $D$, removes $T_5$ from $T_r$, and returns $\delta_D(5)=0$. Then, we update $\delta_D(2) =1$ and $h(T_2)=0$, and since $h(T_2) + \delta_D(2) =1 < k$, we remove $T_2$ from $T_r$ and return $\delta_D(2)=1$ to \emph{ModifiedDFS}$(1)$. In \emph{ModifiedDFS}$(1)$, since $\delta_D(1)=2$ and, after traversing vertex 3, $h(T_1)=1$, we return $\delta_D(1)=\infty$, and therefore, we add vertex $1$ to $D$ as well.
\begin{figure}[htb]
   \centering
       \includegraphics[width=0.2\textwidth]{example}
   \caption{A tree $T_r$ rooted at $r=1$. For $k=2$, Algorithm~\ref{alg:tree} returns $D = \{1,5,7\}$.}              
   \label{fig:example}
\end{figure}

\begin{theorem} \label{thm:tree}
Let $T$ be a tree on $n$ vertices. Then, for any integer $k \ge 1$, one can find a minimum $k$-hop dominating set of $T$ in 
$O(n)$ time.
\end{theorem}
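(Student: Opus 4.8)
The plan is to establish correctness of Algorithm~\ref{alg:tree} by induction on the recursion of \emph{ModifiedDFS}, using Lemmas~\ref{lemma:tree1} and~\ref{lemma:tree2} as the two structural reductions, and then to argue the $O(n)$ running time separately. The key invariant I would maintain is the following: at the moment \emph{ModifiedDFS}$(u)$ returns, the set $D$ computed so far is a minimum \kdom{} for $B = V \setminus (\text{vertices still present in } T_r)$ restricted to the part of the tree already processed, and the returned value equals $\delta_D(u)$ if $T_u$ has been deleted (because all of $T_u$ is already $k$-dominated) and $\infty$ otherwise. Equivalently, I would phrase the invariant in terms of $\Gamma$: after processing $u$, the remaining instance $(V', B')$ together with the committed vertices of $D$ satisfies $|D| + \Gamma_{V'}(B') = \Gamma_V(V)$.

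First I would set up the induction: the base case is a leaf $u$, where $h(T_u) = 0$; if $k \ge 1$ the algorithm returns $\infty$ without modifying $D$, which is correct since a single leaf of height $0 < k$ need not be dominated from within $T_u$ (this is exactly Observation~\ref{obs:easyLife}/Lemma~\ref{lemma:tree1} applied repeatedly). For the inductive step at an internal vertex $u$, I would consider the three branches of the procedure in turn. (i) If $h(T_u) = k$ on entry, then $T_u$ contains a leaf $v$ at depth $k$; every such leaf is in the current $B$ (leaves not in $B$ would already have been removed by Lemma~\ref{lemma:tree1}-type pruning), so Lemma~\ref{lemma:tree2} applies and adding $u$ to $D$, deleting $T_u$, and returning $0$ is optimal. (ii) If some already-selected descendant $v$ of $u$ satisfies $d_T(u,v) + h(T_u) \le k$, then by Observation~\ref{obs:heightK} (applied to the truncated subtree) every remaining vertex of $T_u$ is within distance $k$ of $v \in D$, so deleting $T_u$ changes neither $D$ nor the optimal value, and returning the minimum such $d_T(u,v)$ correctly records $\delta_D(u)$. (iii) Otherwise $T_u$ must be passed up intact; after the children have been processed the height may have dropped to $k$, in which case case (i)'s argument applies again, and otherwise $h(T_u) < k$ and we correctly return $\infty$, leaving the decision to an ancestor. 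At the root, if \emph{ModifiedDFS}$(r)$ returns $\infty$ then $T_r$ was never fully deleted, meaning the remaining component has height $< k$, so by Observation~\ref{obs:heightK} adding $r$ completes a minimum solution; combining this with the $\Gamma$-invariant gives $|D| = \Gamma_V(V) = \gamma_k(T)$.

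For the running time I would observe that \emph{ModifiedDFS} is a single DFS traversal: each vertex is visited once as the argument of a recursive call, each edge is traversed once, the heights $h(T_u)$ are precomputed by one bottom-up pass in $O(n)$, and every deletion ``$T_r \leftarrow T_r \setminus T_u$'' is realized implicitly by setting $h(T_u) \leftarrow -1$ and never recursing into a child $v$ with $h(T_v) < k$ — so no vertex is touched after it is deleted. Hence the total work is $O(n)$. The main obstacle is the case analysis in step (ii)/(iii): one must argue carefully that it is never harmful to defer a decision up the tree (so that ``$h(T_u) < k \Rightarrow$ return $\infty$'' loses nothing), and conversely that when a descendant already covers $T_u$ the recorded distance $\delta_D(u)$ is exactly what an ancestor needs — this is where the interaction between the two lemmas and the height bookkeeping has to be pinned down, including the subtle point that after a partial deletion the ``effective height'' of $T_u$ seen by $u$'s parent is the updated value, not the original one.
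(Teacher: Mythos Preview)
Your proposal is correct and follows the same approach the paper uses: the paper does not give an explicit proof block for this theorem but states that ``the construction and correctness of the algorithm is based on Lemma~\ref{lemma:tree1} and Lemma~\ref{lemma:tree2},'' describes the three branches of \emph{ModifiedDFS} informally, and leaves the linear-time bound implicit in the DFS structure. Your plan simply makes this argument explicit via a DFS-recursion invariant tied to the $\Gamma$-equation, which is exactly the intended reading of the paper's development.
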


\subsection{Quadratic-time algorithm for unicyclic graphs}
In this section, we present a simple $O(n^2)$-time algorithm that solves \emph{kHDS} on unicyclic graphs (in Section~\ref{sec:unicyclic}, we will improve the running time to $O(n)$).
In~\cite{Kundu16}, Kundu and Majumder proved the following theorem.
\begin{theorem}[\cite{Kundu16}] \label{thm:Kundu}
For each optimal $k$-hop dominating set $D$ of a connected graph $G$, there exists a spanning tree $T$ of $G$, such that $D$ is also an optimal $k$-hop dominating set of $T$.
\end{theorem}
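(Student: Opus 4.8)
The plan is to exploit the elementary fact that passing from $G$ to a spanning subgraph can only \emph{increase} pairwise hop-distances: for every spanning tree $T$ of $G$ and all $u,v\in V$ we have $d_G(u,v)\le d_T(u,v)$, so any $k$-hop dominating set of $T$ is automatically a $k$-hop dominating set of $G$, whence $\gamma_k(T)\ge\gamma_k(G)$. Consequently it suffices to exhibit \emph{one} spanning tree $T$ of $G$ in which the given optimal set $D$ is still a $k$-hop dominating set: then $|D|=\gamma_k(G)\le\gamma_k(T)\le|D|$, which forces $D$ to be a minimum $k$-hop dominating set of $T$ as well.

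To construct such a $T$, I would first run a multi-source breadth-first search using the entire set $D$ as sources, each $d\in D$ starting at distance $0$ and being recorded as its own root. This produces a spanning forest $F=T_1\cup\dots\cup T_m$, one tree $T_i$ rooted at a vertex $d_i\in D$, with the property that every vertex $v\in V(T_i)$ satisfies $d_{T_i}(v,d_i)=d_G(v,D)\le k$; here the equality is because BFS realises shortest paths to the source set, and the inequality is exactly the hypothesis that $D$ is a $k$-hop dominating set of $G$. Since $G$ is connected, I can then pick $m-1$ edges of $E$ that join $T_1,\dots,T_m$ into a single connected subgraph $T$ — contract each $T_i$ to a point, take any spanning tree of the resulting connected multigraph, and keep the corresponding edges of $G$. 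A quick edge count, $\sum_{i=1}^{m}(|V(T_i)|-1)+(m-1)=|V|-1$, together with connectivity, shows that $T$ is indeed a spanning tree of $G$.

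It remains to verify that $D$ still $k$-hop dominates $T$. The key point is that each $T_i$ is a \emph{connected} subgraph of $T$ and every added edge has its two endpoints in distinct trees $T_i$, so the restriction of $E(T)$ to $V(T_i)$ is precisely $E(T_i)$. Hence for any $v\in V(T_i)$ the unique path from $v$ to $d_i$ in the tree $T$ cannot leave $T_i$, giving $d_T(v,d_i)=d_{T_i}(v,d_i)\le k$ and therefore $\delta_D(v)\le k$ in $T$. Combined with the first paragraph, this yields the claim.

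I expect the only genuinely delicate point to be this last step: one must argue that merging the BFS trees through inter-tree edges does not disturb the already-short distance from a vertex to its BFS root, which rests on the basic tree fact that the path between two vertices of a connected subtree stays inside that subtree. Everything else — monotonicity of distances under taking subgraphs, and the edge count confirming that $T$ is a spanning tree — is routine.
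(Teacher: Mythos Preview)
The paper does not prove this theorem at all: it is quoted verbatim from \cite{Kundu16} and invoked as a black box to justify the quadratic-time algorithm for unicyclic graphs. So there is no in-paper argument to compare against.

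Your argument is correct and self-contained. The two ingredients are exactly the right ones: (i) distances can only grow when passing to a spanning subgraph, so $\gamma_k(T)\ge\gamma_k(G)$ for every spanning tree $T$; and (ii) a multi-source BFS from $D$ produces a forest in which every vertex is within $k$ of its root in $D$, and linking the BFS trees with $m-1$ cross edges yields a spanning tree in which each BFS tree remains an induced (hence isometric) subtree, so those distances are preserved. The only step worth stating a shade more explicitly is that in a multi-source BFS every source is its own root, so the number of trees is $|D|$ and every tree is rooted at a member of $D$; you use this implicitly when you write ``one tree $T_i$ rooted at a vertex $d_i\in D$''. Otherwise the proof is clean, and in fact this BFS-forest-plus-linking construction is essentially the standard way to establish the result.
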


Let $G$ be a unicyclic graph on $n$ vertices and let $C$ be the cycle of $G$. Since there is only one cycle in $G$, there are at most $n$ spanning trees of $G$. For each spanning tree $T$ of $G$, we find a $k$-hop dominating set of $T$ in $O(n)$ time, using Algorithm~\ref{alg:tree}.
Therefore, we can find a $k$-hop dominating set of $G$ in $O(n^2)$ time. The correctness of the algorithm follows immediately from Theorem~\ref{thm:Kundu}.
\begin{theorem}
Let $G$ be a unicyclic graph on $n$ vertices. Then, for any $k \ge 1$, one can find a minimum $k$-hop dominating set of $G$ in $O(n^2)$ time.
\end{theorem}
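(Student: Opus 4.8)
The plan is to reduce \emph{kHDS} on $G$ to solving \emph{kHDS} on each of the (few) spanning trees of $G$ and returning the best of these solutions, using Theorem~\ref{thm:Kundu} for correctness and Theorem~\ref{thm:tree} for the per-tree running time.

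\textbf{Step 1 (enumerate the spanning trees).} First I would compute the unique cycle $C=(v_1,v_2,\ldots,v_m,v_1)$ of $G$, which takes $O(n)$ time, e.g.\ by a DFS that reports the single back edge. Since $G$ is unicyclic we have $|E|=|V|=n$, so every spanning tree is obtained by deleting exactly one edge; deleting an edge not on $C$ would disconnect $G$, so the spanning trees of $G$ are precisely $T_1,\ldots,T_m$, where $T_i=G-e_i$ and $e_i$ is the $i$-th edge of $C$. In particular there are at most $n$ of them. (If $G$ is acyclic it is already a tree and we simply invoke Algorithm~\ref{alg:tree} once.)

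\textbf{Step 2 (solve on each spanning tree) and Step 3 (correctness).} For each $i\in\{1,\ldots,m\}$ I would run Algorithm~\ref{alg:tree} on $T_i$ to obtain, by Theorem~\ref{thm:tree}, a minimum $k$-hop dominating set $D_i$ of $T_i$ in $O(n)$ time, and then output a set of minimum cardinality among $D_1,\ldots,D_m$; the total time is $O(m\cdot n)=O(n^2)$. For correctness, deleting edges cannot decrease hop-distances, so $d_G(u,v)\le d_{T_i}(u,v)$ for all $u,v$, hence every $k$-hop dominating set of $T_i$ is a $k$-hop dominating set of $G$ and $\gamma_k(G)\le\min_i|D_i|$. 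Conversely, let $D^*$ be an optimal $k$-hop dominating set of $G$; by Theorem~\ref{thm:Kundu} there is a spanning tree $T$ of $G$ for which $D^*$ is also optimal, and by Step 1 $T=T_j$ for some $j$, so $D^*$ is a $k$-hop dominating set of $T_j$ and $|D_j|\le|D^*|=\gamma_k(G)$. Thus $\min_i|D_i|=\gamma_k(G)$, so the algorithm outputs a minimum $k$-hop dominating set of $G$.

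\textbf{Main obstacle.} There is essentially no difficult step remaining: the combinatorial core is Theorem~\ref{thm:Kundu} and the algorithmic core is Theorem~\ref{thm:tree}, both already available. The only point needing a short argument is the structural claim in Step 1 that the spanning trees of a unicyclic graph are exactly the $m$ graphs obtained by removing one cycle edge, which follows from the cyclomatic count $|E|-|V|+1=1$; the rest is bookkeeping, and the remaining challenge — pushing the running time down to $O(n)$ — is deferred to Section~\ref{sec:unicyclic}.
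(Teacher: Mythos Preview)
Your proposal is correct and follows essentially the same approach as the paper: enumerate the at most $n$ spanning trees of the unicyclic graph (obtained by deleting one cycle edge each), solve \emph{kHDS} on each via Theorem~\ref{thm:tree}, and invoke Theorem~\ref{thm:Kundu} for correctness. You have merely supplied more detail (the DFS to find $C$, the cyclomatic-count justification, and the two directions of the optimality argument) than the paper's terse paragraph, but the method is the same.
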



\subsection{Piercing segments on a line}\label{sec:piercingSegLine}
Let $\A=\{a_1,a_2,\dots,a_n\}$ be a set of $n$ segments on a horizontal line $\ell$. For each $a_i \in A$, let $s_i, e_i$ be the starting and the ending points of $a_i$, respectively, such that $s_i < e_i$. A set $\P$ of points on $\ell$ is a piercing set of $\A$ if for every segment $a_i \in \A$, $a_i \cap \P \neq \emptyset$, that is, for each $a_i \in \A$, there is a point $p \in \P$, such that $s_i \le p \le e_i$.
Assuming $e_i < e_j$, for each $1 \leq i < j \leq n$, a minimum piercing set of $\A$ can be greedily computed in $O(n)$ time.
\old{
 as follows. Set $\A'= \emptyset$. In each iteration we add to $\A'$ the segment $a_i \in \A$, such that $l_i > r_j$, for each point $a_j \in \A'$, and $r_i \le r_k$, for each $a_k \in \A \setminus \A'$. Clearly, for each segment $a_j \in \A$, there exists a segment $a_i \in \A'$, such that $l_j \le r_i \le r_j$, i.e., $r_i$ pierces $a_j$. Therefore, the set $\P = \{r_i \ : \ a_i \in \A'\}$ is a minimum piercing set of $\A$. Since the rightmost endpoints of the segments of $\A$ are sorted, $\A'$ can be computed in $O(n)$ time.
}

\begin{theorem} \label{thm:segments}
Let $\A=\{a_1,a_2,\dots,a_n\}$ be a set of $n$ segments on a horizontal line, such that $e_i < e_j$, for each $1 \leq i < j \leq n$. Then, one can find a minimum piercing set of $\A$ in $O(n)$ time.
\end{theorem}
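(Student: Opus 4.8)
The plan is to analyze the textbook greedy sweep for interval stabbing. Scan the segments in the given order $a_1,\dots,a_n$, maintaining a piercing set $\P$ (initially empty) together with a pointer $p$ to the last point added to $\P$ (initially $p=-\infty$). When segment $a_i$ is reached, test whether $p$ pierces it, i.e.\ whether $s_i\le p$; if so, do nothing, otherwise add $e_i$ to $\P$ and set $p\leftarrow e_i$. Output $\P$ at the end. Since the $e_i$ are given in sorted order, this is a single pass over the input, so the running time is $O(n)$; it remains to prove that $\P$ is a minimum piercing set.

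Feasibility is immediate from the update rule: each $a_i$ is either pierced by the point $p$ that is active when it is processed, or causes $e_i\in\P$, and $s_i\le e_i$ means $e_i$ pierces $a_i$. For optimality I would isolate the segments $a_{i_1},a_{i_2},\dots,a_{i_m}$ (with $i_1<i_2<\dots<i_m$) whose right endpoints were actually inserted into $\P$, so that $|\P|=m$, and show that they are pairwise disjoint. The key observation is that between the insertions of $e_{i_{t-1}}$ and $e_{i_t}$ the pointer $p$ equals $e_{i_{t-1}}$, and $e_{i_t}$ is inserted precisely because it fails the test, i.e.\ $s_{i_t}>e_{i_{t-1}}$; combined with the sorted order $e_{i_1}<e_{i_2}<\dots<e_{i_m}$ this gives $s_{i_t}>e_{i_{t'}}$ for every $t'<t$, so $a_{i_t}\cap a_{i_{t'}}=\emptyset$. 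Since any piercing set must use a distinct point for each of these $m$ pairwise-disjoint segments, every piercing set has size at least $m$, hence $\P$ is optimal.

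The only subtlety is the invariant that when $a_i$ is processed the active pointer $p$ equals $e_j$ for some already-processed $j<i$ (or $p=-\infty$), so that $p<e_i$ and therefore ``$s_i\le p$'' is exactly the condition ``$p\in a_i$'' — one does not also need to check $p\le e_i$. This is where the hypothesis that the segments are sorted by right endpoint is used; dropping it would force a preliminary sort and bring the bound back to $O(n\log n)$. Everything else is a routine induction on the sweep.
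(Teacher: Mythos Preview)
Your proof is correct and follows exactly the greedy sweep the paper has in mind; in fact the paper states the theorem essentially without proof, simply remarking that ``a minimum piercing set of $\A$ can be greedily computed in $O(n)$ time,'' so your argument supplies the details the paper omits. The optimality certificate via the pairwise-disjoint segments $a_{i_1},\dots,a_{i_m}$ and your remark that the sorted-right-endpoint hypothesis is what makes the test $s_i\le p$ suffice are both the standard justifications for this folklore result.
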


\section{Piercing Circular Arcs} \label{sec:arcs}

Let $\A=\{a_1=[s_1,e_1],a_2=[s_2,e_2],\dots,a_n=[s_n,e_n]\}$ be a set of $n$ circular arcs located on a circle $C$.
A set $\P$ of points on $C$ is a \emph{piercing set} for $\A$ if, for every arc $a_i \in \A$, $a_i \cap \P \neq \emptyset$.
After sorting the endpoints of the arcs by their polar angle, Katz et al.~\cite{Katz03} showed that a minimum piercing set of $\A$ can be computed in $O(n \log{n})$ time, using a dynamic data structure that supports insertions and deletions to/from $\A$. 
In this section, we present a linear-time algorithm for this problem.

\old{
We assume that no arc $a_i$ is contained in another arc $a_j$, otherwise, we remove $a_j$ from $\A$, since any point that pierces $a_i$ also pierces $a_j$.
Let $\mu$ be a point on $C$. 
For two points $p$ and $q$ on $C$, we say that $p < q$ if and only if $p$ appears before $q$ in clockwise order with respect to $\mu$. 
Let $S=\{s_1,s_2,\dots,s_n \}$ and $F=\{e_1,e_2,\dots,e_n\}$ be the sets of the starting and the ending points of the arcs of $\A$, respectively. 
Let $X=S \cup F = \{p_1,p_2,\dots,p_{2n}\}$ be the set of all the endpoints of the arcs of $\A$, and assume that $p_1 \le p_2 \le \dots \le p_{2n}$. 
}
We assume that no arc $a_i$ is contained in another arc $a_j$, otherwise, we remove $a_j$ from $\A$, since any point that pierces $a_i$ also pierces $a_j$.
Let $S=\{s_1,s_2,\dots,s_n \}$ and $F=\{e_1,e_2,\dots,e_n\}$ be the sets of the starting and the ending points of the arcs of $\A$, respectively.
Let $\mu = s_1$ and let $p_1,p_2,\dots,p_{2n}$ be the points of $X = S \cup F$ ordered in clockwise order with respect to $p_1= \mu$.  For simplicity of presentation, we say that $p_1 \le p_2 \le \dots \le p_{2n}$; see Figure~\ref{fig:piercingArcs}.
\begin{figure}[ht]
   \centering
       \includegraphics[width=0.63\textwidth]{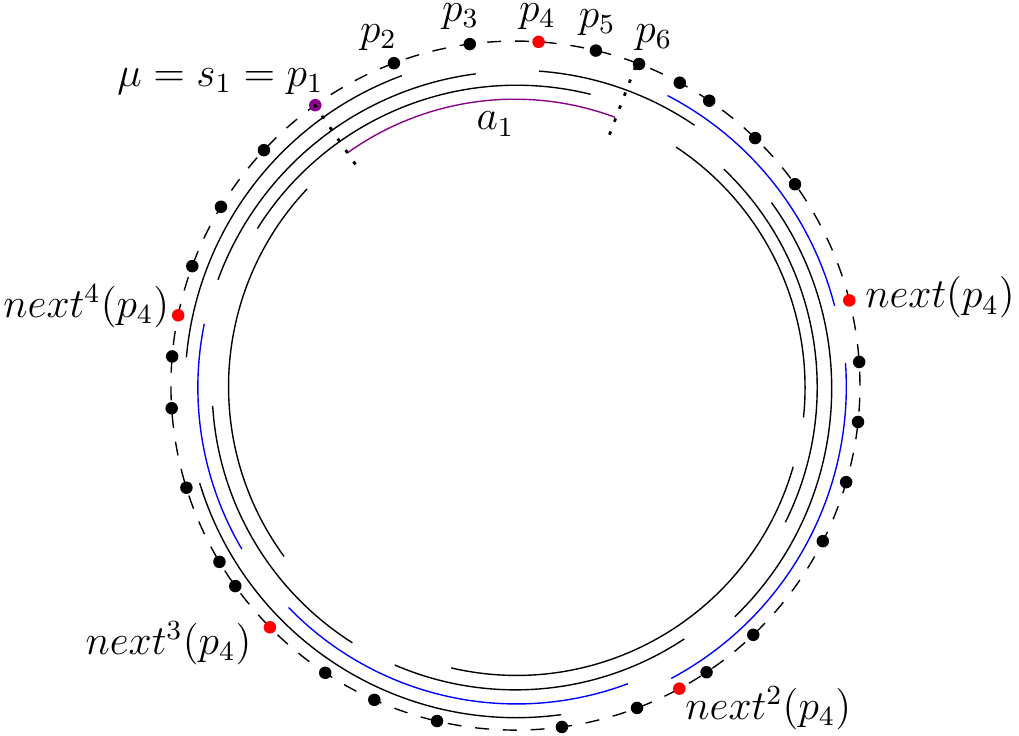}
   \caption{The arcs of $\A$. $X_1 = \{p_1,p_2,p_3,p_4,p_5,p_6\}$. $\P_{\mu}(p_4)$ consists of the red points. }              
   \label{fig:piercingArcs}
\end{figure}

\begin{observation} \label{obs:opt1}
There exists a minimum piercing set of $\A$ that contains only points from the set $F$.
\end{observation}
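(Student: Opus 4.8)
The plan is a standard exchange argument. Let $\P$ be any minimum piercing set of $\A$, and suppose $\P$ contains a point $p$ that is not an ending point of any arc. I would show that $p$ can be slid clockwise (in the direction of increasing polar angle with respect to $\mu$) until it either hits the first ending point $e_i$ of some arc, and replace $p$ by that $e_i$; this yields a piercing set of the same cardinality whose points all lie in $F$ after repeating for every non-$F$ point.

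The key steps, in order: First, I would fix a point $p \in \P \setminus F$. Consider the set $\A_p$ of arcs pierced by $p$ but by no other point of $\P$ (if $p$ pierces only arcs that are already pierced elsewhere, simply delete $p$, contradicting minimality — so $\A_p \neq \emptyset$). Second, among the arcs in $\A_p$, let $a_i$ be the one whose ending point $e_i$ is reached first when moving clockwise from $p$; equivalently, $e_i = \min\{e_j : a_j \in \A_p\}$ in the clockwise order starting at $p$. Third, I claim $e_i$ still lies inside every arc of $\A_p$: for any $a_j \in \A_p$, the point $p$ lies in $a_j$, and $e_i$ lies on the sub-arc from $p$ to $e_j$ traversed clockwise (by choice of $e_i$ as the nearest ending point), hence $e_i \in a_j$. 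Here I would use the no-containment assumption to rule out the degenerate wrap-around case: since no arc contains another, the clockwise arc from $p$ to $e_i$ cannot "wrap past" the start of some $a_j \in \A_p$ without $a_j$ being contained in $a_i$ or vice versa. Fourth, replacing $p$ by $e_i$ in $\P$ keeps all arcs of $\A_p$ pierced and does not unpierce any other arc (those were pierced by $\P \setminus \{p\}$ already), so the new set is still a minimum piercing set and has strictly fewer points outside $F$. Iterating over all such points yields the claim.

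The main obstacle I anticipate is handling the circular wrap-around carefully — on a line this exchange argument is immediate, but on a circle one must verify that "moving clockwise to the first ending point" does not accidentally exit some arc in $\A_p$ from the other side. This is exactly where the no-containment reduction (already made in the paragraph preceding the observation) does the work: it guarantees that the arcs in $\A_p$, all of which contain $p$, are "nested-free," so their ending points and starting points interleave in a way that forces $e_i$ to lie in all of them. I would state this interleaving lemma explicitly and then the rest is routine.
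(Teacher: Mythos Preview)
Your proposal is correct and follows the same exchange argument as the paper: replace each point $p \notin F$ by a nearby ending point and check that every arc previously pierced by $p$ is still pierced. The paper's version is a touch simpler---it replaces $p$ by the first ending point in \emph{all} of $F$ clockwise from $p$ (rather than restricting to your set $\A_p$), which makes the verification immediate and renders your wrap-around concern and appeal to the no-containment assumption unnecessary for this observation.
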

\begin{proof}
Let $P$ be a minimum piercing set of $\A$. If $P$ contains a point $p \notin F$, then we can replace $p$ by the point $q = \min \{e_i \in F \ : \ e_i > p \}$, and obtain a set $P\cup \{q\} \setminus \{p\}$ which is also a minimum piercing set of $\A$.
\end{proof}

Let $X_1 =X \cap a_1$.
Let $\P$ be a minimum piercing set of $\A$, and notice that $\P$ must contain at least one point from $X_1$. Let $p_j \in \P \cap X_1$. 
Then, by removing all the arcs that are pierced by $p_j$ and considering the remaining  arcs as a set of segments on a line, 
we can find a minimum piercing set of $\A$ in $O(n)$ time, using Theorem~\ref{thm:segments}. 
Therefore, we can find a minimum piercing set of $\A$ in $O(|X_1|\cdot n)$ time.
In the following we show how to implement this algorithm in $O(n)$ time, regardless of the size of $X_1$.

For each point $p_i \in X$, let $\A_i = \{a_j \in \A \ : \ p_i \text{ pierces } a_j\}$ and let $next(p_i)$ be the point $e_j \in F$, such that $s_j > p_i$ and $e_j \le e_k$, for each $a_k \in \A \setminus \A_i$; see Figure~\ref{fig:piercingArcs}.
That is, $e_j$ is the ending point of the first arc in $\A \setminus \A_i$ that appears after $p_i$ (in clockwise order). 

\subsubsection*{Step~1}
For each $p_i \in X$, we compute $next(p_i)$ using Algorithm~\ref{proc:computeNext}. 

\floatname{algorithm}{Procedure}
\begin{algorithm}[ht]
\caption{\emph{Compute-next}($X$, $S$, $F$) } \label{proc:computeNext}
\textbf{Input:} A set $\A$ of arcs sorted in clockwise order by their ending points  \\
\textbf{Output:} $next(p_i)$, for each point $p_i \in X$  

\begin{algorithmic}
\STATE $i \leftarrow 1$
\STATE $j \leftarrow 2$
\STATE \textbf{while} $i < 2n$ \textbf{do} \\
\quad \ \textbf{if} $p_j = s_k \in S$ \textbf{then} \\ 
\quad \ \quad \ $next(p_i) \leftarrow e_k$ \\
\quad \ \quad \ $i++$ \\
\quad \ \textbf{else}  \\
\quad \ \quad \ $j \leftarrow (j+1) \mod 2n$
\end{algorithmic}
\end{algorithm}

\subsubsection*{Step~2} 
For a point $p \in X$ and an integer $i > 1$, let $next^i(p)$ be $next(next^{i-1}(p))$; see Figure~\ref{fig:piercingArcs}.
For two points $p,q \in X$, let $\P_{q}(p) = \{p, next(p), next^2(p), \dots, next^k(p)\}$, such that $next^{k-1}(p) < q \le next^k(p)$.
The following observation follows from the fact that there exists a point $p \in X_1$, such that $\P_{p}(p)$ is a minimum piercing set of $\A$.
\old{
The following observation follows from the fact that every piercing set must contains a point  $p \in X_1$. Moreover, given such a point $p$ the problem 
reduces to piercing segments on a line, where $\P_{p}(p)$ is an optimal solution to it (see~\ref{sec:piercingSegLine}).
}
\begin{observation} \label{obs:opt2}
A minimum piercing set of $\A$ can be computed by taking the minimum over the sets $\P_{p}(p)$, for all $p \in X_1$.
\end{observation}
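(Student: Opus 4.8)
The plan is to show that some set of the form $\P_p(p)$ with $p \in X_1$ is a minimum piercing set, and that every $\P_p(p)$ with $p \in X_1$ is at least a valid piercing set; the observation then follows by taking the smallest such set. The key structural fact to exploit is that, once we fix a single piercing point $p \in X_1$ on the arc $a_1$, the arc $a_1$ is covered, and the remaining arcs — none of which contains another by our earlier assumption — can be "cut open" at $p$ and viewed as a family of segments on a line. On that line, the greedy rule behind Theorem~\ref{thm:segments} is exactly: start at $p$, and repeatedly jump to the ending point of the first as-yet-unpierced arc, which is precisely the function $next(\cdot)$. Iterating this until we wrap back around past $p$ yields exactly $\P_p(p)$, so $\P_p(p)$ is an optimal piercing set of the line instance and in particular a valid piercing set of $\A$ that contains $p$.

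The first step is to verify that $\P_p(p)$ is always a piercing set of $\A$ for any $p\in X_1$. Since $p$ pierces $a_1$ and, by construction, $next(q)$ is the ending point of the first arc not pierced by $q$ (reading clockwise), an easy induction on $i$ shows that every arc whose starting point lies in the clockwise interval $(p, next^i(p)]$ is pierced by one of $p, next(p),\dots, next^i(p)$. Because $\P_p(p)$ is defined to run until $next^{k-1}(p) < p \le next^k(p)$, i.e.\ until the jumps wrap all the way around the circle back past $p$, every arc of $\A$ has its starting point caught in one of these intervals and is therefore pierced. (Here we use that no arc is contained in another, so an arc is pierced by a point $q$ iff its starting point lies in the appropriate clockwise interval ending at $q$; this is what makes the "cut at $p$, treat as segments" reduction faithful.)

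The second step is to show that $\min_{p\in X_1} |\P_p(p)|$ attains the optimum. Let $\P^\ast$ be a minimum piercing set of $\A$; as noted before the observation, $\P^\ast$ must contain at least one point of $X_1$ (since $a_1$ must be pierced, and after the containment-removal every piercing point of $a_1$ lies in $X\cap a_1 = X_1$ up to the replacement argument of Observation~\ref{obs:opt1}). Pick such a point and call it $p$. Removing $p$ from the circle turns $\A$ minus the arcs pierced by $p$ into a line instance of segments, on which $\P^\ast$ restricted to the complement is a piercing set; hence the optimum of that line instance is at most $|\P^\ast| - 1$. But the greedy algorithm of Theorem~\ref{thm:segments}, initialized at $p$, produces an optimal solution to that line instance of size exactly $|\P_p(p)| - 1$. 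Combining, $|\P_p(p)| - 1 \le |\P^\ast| - 1$, so $|\P_p(p)| \le |\P^\ast|$, and equality holds since $\P_p(p)$ is itself a piercing set by Step~1.

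The main obstacle I anticipate is the bookkeeping in the "cut the circle at $p$" reduction — being careful that the greedy jump on the line really coincides with $next(\cdot)$ as defined (in particular handling the wrap-around modulo $2n$ and the fact that $next$ is defined via starting points $s_j > p_i$ while Theorem~\ref{thm:segments} is phrased via ending points), and confirming that the containment-free assumption makes "pierced by $q$" equivalent to "starting point in the clockwise arc up to $q$". Once that correspondence is pinned down, both directions are short.
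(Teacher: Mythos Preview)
Your proposal is correct and follows essentially the same approach as the paper. The paper does not give a formal proof of this observation at all; it merely states, in the sentence preceding it, that it ``follows from the fact that there exists a point $p \in X_1$, such that $\P_{p}(p)$ is a minimum piercing set of $\A$,'' which is exactly the content of your Step~2 (fix a point of an optimal solution in $X_1$, cut the circle there, and invoke the line-segment greedy of Theorem~\ref{thm:segments}). Your write-up simply makes explicit the two halves the paper leaves implicit --- that every $\P_p(p)$ is a valid piercing set, and that at least one of them is optimal --- and flags the one genuine bookkeeping point (matching the definition of $next(\cdot)$ to the greedy rule under the containment-free assumption), which the paper does not discuss.
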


To this end, we construct a directed graph $G=(X,E)$, such that $E= \{( p, next(p)) \ : \ p \in X\}$. Notice that $|E|= |X|$. 
Recall that $\mu = s_1$ (the starting point of arc $a_1$).
For each point $p \in X_1$, we compute $\P_{\mu}(p)$ recursively using Algorithm~\ref{proc:computeopt}.

\floatname{algorithm}{Procedure}
\begin{algorithm}[ht]
\caption{\emph{ComputeAll}($G=(X,E)$, $\mu$)} \label{proc:computeopt}

\textbf{Input:} A Directed graph $G=(X,E)$ and a reference point $\mu$ \\
\textbf{Output:} $P_{\mu}(p)$, for each $p \in X_1$

\begin{algorithmic}[1]

\STATE \textbf{for} each $p \in X$ \textbf{do} \\
\quad \ $\P_{\mu}(p) \leftarrow \emptyset$  

\STATE \textbf{for} each $p \in X_1$ \textbf{do} \\
\quad \ $\P_{\mu}(p) \leftarrow$ \emph{RecCompute}($p$, $\mu$)  \\

\end{algorithmic}


\underline{\textcolor{white}{zzzzzzZZZZZZZZZZZZZZZZZZZZZZZZZZZZZZZZZZZZZZZZZZZZZZZZZZZZZZZZ}} \\
\underline{\fontsize{9}{12.5}{\sffamily\bfseries Procedure} \emph{RecCopmute}$(p, \mu)$ \textcolor{white}{zzzzzzzzZZZZZZZZZZZZZZZZZZZZZZZZZZZZZZZZZZZ}} 
\begin{algorithmic}[1]
\STATE \textbf{if} $\P_{\mu}(p) \neq \emptyset$ \textbf{then} \\ 
\quad \ \textbf{return} $\P_{\mu}(p)$ \\

\STATE \textbf{if} $next(p) \ge \mu$ \textbf{then} \\
\quad \ $\P_{\mu}(p) \leftarrow \{p\}$ \\
\textbf{else}  \\
\quad \ $\P_{\mu}(p) \leftarrow \{p\} \ \cup$ \emph{RecCompute}($next(p)$, $\mu$) \\

\STATE \textbf{return} $\P_{\mu}(p)$

\end{algorithmic}
\end{algorithm}

For each set $\P_{\mu}(p)$, let $lastNext(p)$ be the last point in $\P_{\mu}(p)$. We check whether $next(lastNext(p)) < p$, which means that the arc ending at $next(lastNext(p))$ is not pierced by $\P_{\mu}(p)$. If so, we add $next(lastNext(p))$ to $\P_{\mu}(p)$.
Finally, we return the set $\P_{\mu}(p)$ of minimum cardinality.
Let $\P$ be the set returned by the algorithm. Thus, by Observation~\ref{obs:opt1} and Observation~\ref{obs:opt2}, $\P$ is a minimum piercing set of $\A$. In the following lemma, we bound the running time of our algorithm.
\begin{lemma} \label{lemma:ComputeTimeCirculeArcs}
Computing $\P$ takes $O(n)$ time.
\end{lemma}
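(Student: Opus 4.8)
The plan is to bound the total work done across all the calls to \emph{RecCompute} issued from the loop over $p \in X_1$ in Procedure~\ref{proc:computeopt}, together with the preprocessing in Step~1 and the final patch-up of each $\P_{\mu}(p)$. First I would argue that Step~1 (Algorithm~\ref{proc:computeNext}) runs in $O(n)$ time: the two pointers $i$ and $j$ only move forward (modulo $2n$), $i$ advances exactly once each time an element $next(p_i)$ is assigned, and $j$ advances at most $O(n)$ times total before wrapping, so the \textbf{while} loop performs $O(n)$ iterations. Constructing the directed graph $G=(X,E)$ with $E=\{(p,next(p)):p\in X\}$ is then immediate in $O(|X|)=O(n)$ time, since each vertex has out-degree exactly one.

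The heart of the argument is memoization. The key observation is that each $\P_{\mu}(p)$ is stored the first time it is computed, and the guard ``\textbf{if} $\P_{\mu}(p)\neq\emptyset$ \textbf{then return}'' at the top of \emph{RecCompute} ensures that the body of \emph{RecCompute}$(p,\mu)$ is executed at most once per point $p\in X$. Each such execution does $O(1)$ work apart from the single recursive call to \emph{RecCompute}$(next(p),\mu)$, which is the successor of $p$ in $G$. Hence the total number of ``fresh'' \emph{RecCompute} executions over the entire run of Procedure~\ref{proc:computeopt} is at most $|X| = 2n$, giving $O(n)$ total time for all the calls made from the loop over $X_1$ — even though $|X_1|$ itself may be large and the sets $\P_{\mu}(p)$ for different starting points $p\in X_1$ may overlap heavily. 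The point is precisely that overlapping suffixes of these chains are computed only once and then reused.

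Next I would handle the output size and the final patch-up. Naively, returning each full set $\P_{\mu}(p)$ for all $p\in X_1$ could cost more than $O(n)$, so I would note that the algorithm does not need the sets themselves but only their cardinalities to select the minimum: one can store $|\P_{\mu}(p)|$ (and a pointer $lastNext(p)$ to the last point) alongside the memo table, each computed in $O(1)$ from the recursive subcall, and reconstruct the single chosen minimum set at the end by following $next$-pointers in $O(n)$ time. The per-$p$ check ``\textbf{if} $next(lastNext(p)) < p$ then add $next(lastNext(p))$'' is $O(1)$, so over all $p\in X_1$ this adds $O(|X_1|)=O(n)$. Taking the minimum over the (at most $2n$) candidate sizes is $O(n)$. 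Summing: $O(n)$ for Step~1, $O(n)$ for building $G$, $O(n)$ for all memoized \emph{RecCompute} calls, $O(n)$ for the patch-up and minimum selection, and $O(n)$ to output the winning set — hence $O(n)$ overall.

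The main obstacle I anticipate is the careful accounting around memoization: one must be sure that a point $p$ reached as $next^j(q)$ for one starting point $q$ is genuinely not recomputed when encountered again from another starting point $q'$, and that the recursion indeed terminates — i.e. that following $next$-edges from any $p\in X_1$ eventually reaches a point whose $next$ is $\ge\mu$ (this is where one uses that $\mu = s_1$ lies in arc $a_1$ and that $X_1 = X\cap a_1$ is nonempty, so the ``wrap-around'' happens after finitely many steps). A secondary subtlety is making the $O(n)$ output bound honest: without the cardinality-only bookkeeping, concatenating $|X_1|$ sets of total size possibly $\omega(n)$ would break the bound, so the proof should explicitly state that only one set is materialized.
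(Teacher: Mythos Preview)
Your proposal is correct and follows essentially the same approach as the paper: bound Step~1 by a two-pointer argument, observe $G$ has $2n$ edges, and use memoization in \emph{RecCompute} so that each edge of $G$ is traversed at most once across all calls. You are in fact more careful than the paper's brief proof, which does not explicitly address the output-size issue you raise (storing only $|\P_{\mu}(p)|$ and $lastNext(p)$ rather than materializing every set).
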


\begin{proof}
It is not hard to see that computing $next(p)$, for each $p \in X$, using Algorithm~\ref{proc:computeNext}, takes $O(n)$ time.
Since each vertex in $G$ has exactly one out-going edge, the number of edges of $G$ is $2n$, and thus, constructing $G$ takes $O(n)$. Moreover, computing $\P_{\mu}(p)$, for all $p \in X_1$, takes $O(n)$, since we traverse each edge in $G$ exactly once during all the calls to Algorithm~\ref{proc:computeopt}. 
Therefore, the running time of the algorithm is $O(n)$.
\end{proof}

The following theorem concludes the result of this section.
\begin{theorem} \label{thm:arcs}
Let $\A=\{a_1,a_2,\dots,a_n\}$ be a set of $n$ circular arcs sorted in clockwise order. Then, one can find a minimum piercing set of $\A$ in $O(n)$ time.
\end{theorem}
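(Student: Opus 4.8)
The plan is to derive the theorem from the structural facts already in place, so that what remains is essentially a correctness argument for the algorithm of this section together with the running-time bound of Lemma~\ref{lemma:ComputeTimeCirculeArcs}. First I would record the two normalizations that the section sets up: by the no-containment assumption, deleting an arc $a_j$ that contains another arc $a_i$ changes neither the value $|\P|$ of an optimum nor the family of optima, and by Observation~\ref{obs:opt1} we may restrict attention to piercing sets $\P \subseteq F$. Since $a_1$ must be pierced and the piercing point can be taken in $F$, every such $\P$ contains a point of $X_1 = X \cap a_1$; let the first one in clockwise order from $\mu = s_1$ be the \emph{anchor} of $\P$. (Note $s_1 \in X_1$, so an anchor always exists.)

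Next I would show that for a fixed anchor $p \in X_1$, the greedy set $\P_p(p) = \{p, next(p), next^2(p), \dots\}$ is a piercing set of minimum cardinality among all piercing sets whose anchor is $p$. Cutting the circle $C$ open at $p$ turns the arcs not pierced by $p$ into a family of segments on a line, sorted by right endpoint, and iterating $next$ is exactly the greedy left-to-right sweep underlying Theorem~\ref{thm:segments}; a standard exchange argument (shift the leftmost chosen point of any competing solution to $next(p)$ and induct) gives optimality. Taking the minimum over all anchors then yields a global minimum, which is precisely the content of Observation~\ref{obs:opt2}.

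It then remains to check that Procedure~\emph{ComputeAll} indeed outputs, for each $p \in X_1$, a set of size $|\P_p(p)|$: the recursion \emph{RecCompute}$(p,\mu)$ produces $\P_\mu(p)$, the greedy chain truncated at the first point reaching or passing $\mu$, and since $p \ge \mu$ the only arc possibly left unpierced by $\P_\mu(p)$ is the one ending at $next(lastNext(p))$, which is repaired by the single conditional test described right after the procedure. Hence the returned set $\P$ is, by Observations~\ref{obs:opt1} and~\ref{obs:opt2}, a minimum piercing set of $\A$. For the running time I would simply invoke Lemma~\ref{lemma:ComputeTimeCirculeArcs}: Step~1 is linear, the functional graph $G=(X,E)$ has $|X|=2n$ edges, memoization of $\P_\mu(\cdot)$ in \emph{RecCompute} guarantees each edge is traversed at most once over all calls, and the final correction costs $O(|X_1|)$, so the total is $O(n)$.

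I expect the main obstacle to be the wraparound bookkeeping in the last step: proving that once $\P_\mu(p)$ has ``gone around'' past $\mu$, at most one additional point is needed to obtain a valid piercing set with anchor $p$, and that this correction never makes the resulting set larger than a competing optimum with the same anchor — equivalently, that the greedy remains optimal when the circle is reopened at the anchor $p$ rather than at $\mu$. The pure interval-piercing optimality for a fixed anchor (the exchange argument) is routine but must be phrased carefully, since the greedy is optimal only after the starting point has been pinned down, which is exactly why the enumeration over $X_1$ in Observation~\ref{obs:opt2} is needed.
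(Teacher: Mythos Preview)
Your proposal is correct and follows essentially the same approach as the paper: the theorem is stated there as the conclusion of the section, with correctness drawn from Observations~\ref{obs:opt1} and~\ref{obs:opt2} (plus the one-point wraparound fix) and running time from Lemma~\ref{lemma:ComputeTimeCirculeArcs}. You are simply making explicit the exchange argument behind Observation~\ref{obs:opt2} and the justification that the single correction after $lastNext(p)$ suffices, points the paper leaves implicit; the structure and ingredients are the same.
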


\subsection{Piercing circular arcs with additional requirements} \label{sec:arcsAdd}
In this section, we show how to solve two variants of the piercing circular arcs problem, while the first 
variant is interesting by itself, the second is presented here for future use, since it is needed in Section~\ref{sec:cactus}.
In both variants the modifications performed on the algorithm do not affect the running time, and thus it stays linear. 

\paragraph*{Variant-1: Piercing circular arcs with respect to a point $\rho$}
Let $\rho$ be a point in $X$. Given a set $P \subseteq X$, 
%
 let $\delta_P(\rho)$ be the number of points in $X$ between $\rho$ and the closest point of $P$. 
%
%
In this variant, we consider the piercing circular arcs problem with an additional requirement. 
That is, in addition to the set $\A$, we are given a point $\rho \in X$, and the goal is to find a 
minimum piercing set $\P$ for $\A$, such that $\delta_{\P}(\rho) \le \delta_{\P'}(\rho)$, for 
every minimum piercing set $\P'$ for $\A$. 

Assume w.l.o.g., that $\rho \in a_1$.  
We perform the previous algorithm with the following modification. 
When we obtain a minimum piercing set $\P$, we check the distance of the first and the last selected points 
of $\P$ from $\rho$, and return the minimum piercing set $\P$ that minimizes this distance.

By the correctness of the previous algorithm, $\P$ is a minimum piercing set for $\A$, 
and, by the modification, $\delta_{\P}(\rho) \le \delta_{\P'}(\rho)$, for 
every minimum piercing set $\P'$ for $\A$. 

\paragraph*{Variant-2: Piercing circular arcs with respect to a point $\rho$ and a positive integer $k$}
Let $\rho$ be a point in $X$ and let $k$ be an integer. 
For each $1 \le i < k$, let $B_i = \{ a_j=[s_j,e_j] \in \A : \ \delta_{\{s_j\}}(\rho) \ge i \ \wedge \ \delta_{\{e_j\}}(\rho) \ge i \}$; see Figure~\ref{fig:piercingArcs1}. 
Let $\P$ be a minimum piercing set for $\A$, and let $m$ be its size (i.e., $m= |\P|$).
In this variant, in addition to the set $\A$, we are given a point $\rho \in X$ and an integer $k > 1$, and the goal is to find the largest integer $1 \le i < k$ and a piercing set $\P'$ for $\A \setminus B_i$, such that $|\P'| = m -1$, if such a set exists.
\begin{figure}[ht]
   \centering
       \includegraphics[width=0.52\textwidth]{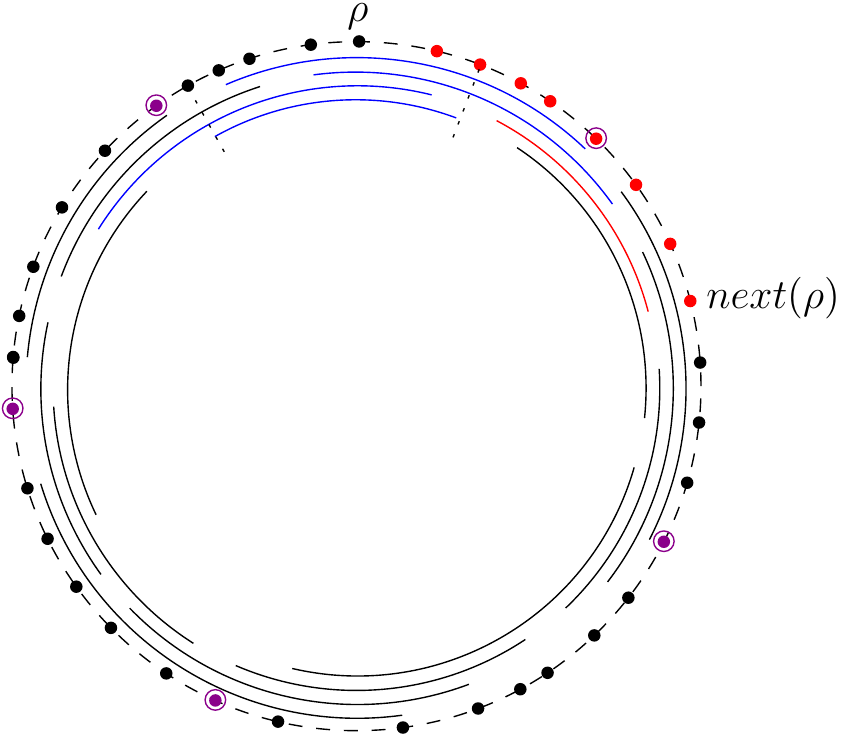}
   \caption{$B_1$ consists of the blue arcs. $X^+$ consists of the red points. A piercing set $\P'$ for $\A \setminus B_i$, for $i=2$, consists of the purple points. }              
   \label{fig:piercingArcs1}
\end{figure}

Let $\A^- = \A \setminus B_1$ and let $X^+ = \{x \in X :  \rho < x \le next(\rho) \}$; see Figure~\ref{fig:piercingArcs1}.
We perform the previous algorithm with the following modifications. 
Instead of going over the points of $X_1$ (in clockwise order), 
we go over the points of $X^{+}$ in sorted order of distance from $\rho$. 
Moreover, instead of requiring to pierce all the arcs in $\A$,  
we require to pierce only $\A^-$.

For each piercing set $\P'_q$ for $\A^-$ of size $m -1$, that starts with a point $q \in X^{+}$, 
we check what is the largest $i$, such that 
$B_1 \setminus A_{\P'_q}  = B_i \setminus A_{\P'_q} $,  where 
$ A_{\P'_q} = \{a \in \A : \ a \ \text{is pierced by a point from } \P'_q \} $.
We return the set $\P'_q$ of size $m -1$ with the largest $i$.

\begin{claim}
The above algorithm finds the largest integer $1 \le i < k$ and 
a piercing set $\P'$ for $\A \setminus B_i$, such that $|\P'| = m -1$, if such a set exists.
\end{claim}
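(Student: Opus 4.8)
The plan is to establish correctness of the Variant-2 algorithm by showing two things: first, that the modified search is exhaustive over the relevant candidate piercing sets, and second, that the quantity $i$ it computes for each candidate is exactly the largest index for which that candidate witnesses a solution. I would first recall the structure of the base algorithm (Theorem~\ref{thm:arcs}): a minimum piercing set of $\A$ can be recovered as $\P_\mu(p)$ for some $p \in X_1$, and more generally every ``canonical'' greedy piercing set starting at a point $p$ is of the form $\P_q(p)$ for the appropriate stopping point $q$. The key structural observation for Variant-2 is that if $\P'$ is a piercing set for $\A \setminus B_i$ of size $m-1$ for some $1 \le i < k$, then in particular $\P'$ pierces $\A^- = \A \setminus B_1$ (since $B_1 \supseteq B_i$), and $\P'$ must leave a ``gap'' around $\rho$ wide enough that it fails to pierce some arc of $B_1$; hence w.l.o.g.\ $\P'$ can be taken to start (in clockwise order from $\rho$) at a point of $X^+ = \{x \in X : \rho < x \le next(\rho)\}$, because $next(\rho)$ is by definition the first ending point of an arc not pierced by $\rho$, so any smaller gap would force piercing all of $B_1$. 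This justifies restricting the outer loop to $X^+$ rather than $X_1$, and requiring only that $\A^-$ be pierced.

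Next I would argue the accounting step. Fix a candidate $\P'_q$ for $\A^-$ of size $m-1$ starting at $q \in X^+$, and let $A_{\P'_q}$ be the set of all arcs of $\A$ it happens to pierce (this includes $\A^-$ and possibly some arcs of $B_1$). The set $B_1 \setminus A_{\P'_q}$ is precisely the set of $B_1$-arcs that $\P'_q$ fails to pierce; $\P'_q$ is a valid piercing set for $\A \setminus B_i$ exactly when $B_i \setminus A_{\P'_q} = \emptyset$ as a subset relationship, equivalently $B_1 \setminus A_{\P'_q} \subseteq B_1 \setminus B_i$. Since the sets $B_i$ are nested ($B_1 \supseteq B_2 \supseteq \cdots$), the largest $i$ with $B_i \setminus A_{\P'_q} = B_1 \setminus A_{\P'_q}$ — i.e.\ the largest $i$ such that shrinking the band from $B_1$ to $B_i$ does not remove any of the unpierced arcs — is exactly the largest $i$ for which $\P'_q$ pierces $\A \setminus B_i$. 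Checking this condition amounts to computing, for each unpierced arc $a = [s_j,e_j] \in B_1 \setminus A_{\P'_q}$, the value $\min(\delta_{\{s_j\}}(\rho), \delta_{\{e_j\}}(\rho))$ and taking the minimum over such arcs; $i$ is that minimum (or $k-1$ if it is at least $k$, or there is no valid $i < k$ if it is $0$, meaning some arc with an endpoint at distance $0$ from $\rho$ — i.e.\ $\rho$ itself on its boundary — is unpierced). Returning the candidate with the largest such $i$, and reporting failure if no size-$(m-1)$ candidate exists, then gives exactly what the claim demands.

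I expect the main obstacle to be the ``w.l.o.g.\ $\P'$ starts in $X^+$'' reduction, because it requires showing that an arbitrary size-$(m-1)$ piercing set of $\A \setminus B_i$ can be transformed, without increasing its size and without piercing any arc it did not already need to pierce outside $\A \setminus B_i$, into one of the canonical greedy form whose first point lies in $X^+$. The standard exchange argument (slide each chosen point clockwise to the next arc-endpoint it would stop the greedy sweep at, cf.\ Observation~\ref{obs:opt1}) handles the ``canonical form'' part, but one must be careful that this sliding does not accidentally start piercing arcs of $B_i$ that were previously unpierced in a way that changes which $i$ is achievable — and also that the first point genuinely can be taken in $X^+$ rather than wrapping past $next(\rho)$. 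The resolution is that sliding only ever moves a point forward to an arc-ending point, which can only reduce the set of unpierced arcs, hence only increase the achievable $i$; and the first-point placement follows because a canonical greedy set leaving $\rho$ unpierced must have its first point strictly after $\rho$ and no later than the endpoint $next(\rho)$ of the first arc $\rho$ misses. Once this reduction is in place, the exhaustiveness of the loop over $X^+$ and the correctness of the per-candidate $i$ computation are routine, and since each modification only adds $O(1)$ bookkeeping per traversed edge of $G$, the running time stays $O(n)$ by Lemma~\ref{lemma:ComputeTimeCirculeArcs}.
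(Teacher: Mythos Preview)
Your proposal is correct and follows the same overall route as the paper: argue that the first point (in clockwise order from $\rho$) of an optimal solution must lie in $X^+$, so that enumerating greedy piercing sets of $\A^-$ starting from each $q\in X^+$ is exhaustive. The paper's proof is in fact extremely terse---it only states that $q_{opt}\in X^+$ (else some arc of $\A\setminus B_1$ is unpierced) and concludes immediately.

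What you add beyond the paper is genuine content that the paper's three-line sketch omits: (i) the exchange argument that an arbitrary optimal $\P'$ can be put into canonical greedy form without decreasing the achievable $i$, (ii) the verification that the per-candidate computation of the largest $i$ with $B_1\setminus A_{\P'_q}=B_i\setminus A_{\P'_q}$ is exactly the largest $i$ for which $\P'_q$ pierces $\A\setminus B_i$, using the nesting $B_1\supseteq B_2\supseteq\cdots$, and (iii) the observation that the modifications preserve the $O(n)$ bound. These are details the paper asserts implicitly (``Thus, we are done'') but never spells out; your version is the more complete argument, not a different one.
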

\begin{proof}
Assume such a piercing set exists, and let $\P_{opt}$ be this piercing set.
Let $q_{opt}$ be the starting point in $\P_{opt}$ (i.e., the closest point in $\P_{opt}$ to $\rho$ in clockwise order).
Clearly, $q_{opt} \in X^+$, otherwise, there is an arc in $\A \setminus B_1$ that is not pierced by $\P_{opt}$.
Thus, we are done, since we consider such a solution that starts with $q_{opt}$.
\end{proof}

\section{Solving \emph{kHDS} on Unicyclic Graphs} \label{sec:unicyclic}
In this section, we present a linear-time algorithm that computes a minimum $k$-hop dominating set in unicyclic graphs.
 
Let $G=(V,E)$ be a unicyclic graph on $n$ vertices and let $C$ be the cycle of $G$. Let $r_1,r_2,\dots,r_m$ be the vertices of $C$ ordered in clockwise order with an arbitrary first point $r_1$.
Since $G$ is a unicyclic graph, each $r_i \in C$ is a root of a tree $T_i$. Our algorithm computes a $k$-hop dominating set $D^*$ of $G$ in three steps.

\subsubsection*{Step~1}
For each tree $T_i$ rooted at $r_i$, we compute a $k$-hop dominating set $D_i$ using Algorithm~\ref{alg:tree}. 
For each $1\le i \le m$, let $\delta_{D_i}(r_i)$ and $h(T_i)$ be the distance of $r_i$ from $D_i$ and the height of $T_i$ that are computed at the end of Algorithm~\ref{alg:tree}, respectively.
\begin{itemize} 
	\item If $r_i \notin D_i$, then $1 \le \delta_{D_i}(r_i) \le k$ and $h(T_i) = -1$.
	\item If $r_i \in D_i$, then either $r_i$ was added in Algorithm~\ref{alg:DFS} or in Algorithm~\ref{alg:tree} (Line~4). If $r_i$ was added in Algorithm~\ref{alg:DFS} (when $h(T_i) = k$), then $\delta_{D_i}(r_i) = 0$. Otherwise, $r_i$ was added in Algorithm~\ref{alg:tree} (Line~4, when $0 \le h(T_i) < k$), then we remove $r_i$ from $D_i$ and 
	$\delta_{D_i}(r_i) = \infty$.
\end{itemize}

\subsubsection*{Step~2}
Let $D = \bigcup_{1 \le i \le m} D_i$.
In this step, we consider the roots $r_i \in C$ having $\delta_D(r_i) < k$ in non-decreasing order of $\delta_D(r_i)$. 
For each $r_i$, we consider its neighbors $r_{i-1}$ and $r_{i+1}$. 
If $\delta_D(r_{i-1}) > \delta_D(r_i) +1$ and $h(T_{i-1}) + \delta_D(r_i) +1 \le k$, then we set $\delta_D(r_{i-1}) = \delta_D(r_i) + 1$ and 
update the location of $r_{i-1}$ in the non-decreasing order according to its new $\delta_D(r_{i-1})$ value. 
Similarly, if $\delta_D(r_{i+1}) > \delta_D(r_i) +1$ and $h(T_{i+1}) + \delta_D(r_i) +1 \le k$, then we set $\delta_D(r_{i+1}) = \delta_D(r_i) + 1$ and update the location of $r_{i+1}$ in the non-decreasing order according to its new $\delta_D(r_{i+1})$ value.

\subsubsection*{Step~3}
Let $B=\{u \in C : \delta_D(u) = \infty\}$. 
Recall that a partial $k$-hop dominating set of $B$, denoted by $k$-$Dom_{C} (B)$, is  a subset $D'$ of $C$, such that 
$B \subseteq \N_k(D')$.  
In this step, we compute a minimum $k$-$Dom_{C} (B^+)$ by a reduction to the problem of piercing circular arcs (Section~\ref{sec:arcs}), where $B^+$ is the set containing the vertices of $B$ with their tag along trees.
Notice that, each $r_i \in B$ has $0 \le h(T_i) < k$. 
For each $r_i \in B$, let $m_i=k-h(T_i)$. 
We produce a circular arc $a_i \subseteq C$ containing the vertices $\{r_{i-m_i},r_{i-m_i+1},\dots,r_i,r_{i+1},\dots,r_{i+m_i}\}$ of $C$, where each of them dominates the vertices of $T_i$; see Figure~\ref{fig:arcs}. Let $\A$ be the set of the produced arcs. 
Therefore, to compute a minimum $k$-$Dom_{C} (B^+)$, it is sufficient to compute a minimum piercing set of $\A$.
We compute a minimum piercing set $\P$ of $\A$, using Theorem~\ref{thm:arcs}. At the end of this step, we return the set $D^* = D \cup \P$.
\begin{figure}[ht]
   \centering
       \includegraphics[width=0.44\textwidth]{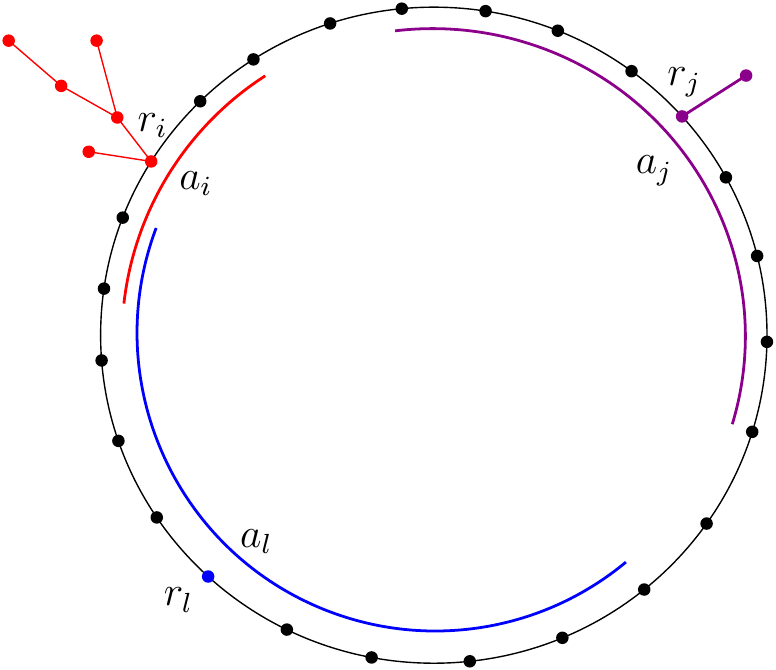}
   \caption{The arcs produced in Step~3, for $k=5$. }              
   \label{fig:arcs}
\end{figure}

\begin{lemma} \label{lemma:CorrectnesUnicycle}
$D^*$ is a minimum $k$-hop dominating set of $G$.
\end{lemma}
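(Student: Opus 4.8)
The plan is to prove Lemma~\ref{lemma:CorrectnesUnicycle} by establishing two things: that $D^* = D \cup \P$ is indeed a $k$-hop dominating set of $G$, and that no $k$-hop dominating set of $G$ is smaller. For the first part I would argue tree by tree. By the correctness of Algorithm~\ref{alg:tree} (Theorem~\ref{thm:tree}), after Step~1 every vertex of $T_i$ strictly below depth-gap is dominated within $T_i$ \emph{except} possibly vertices within distance $h(T_i)$ of the root $r_i$ when $r_i \notin D$; the invariant maintained is that a vertex $v \in T_i$ is $k$-dominated by $D^*$ as soon as $\delta_{D^*}(r_i) + h(T_i) \le k$, i.e. as soon as $r_i$ is $(k-h(T_i))$-dominated. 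Steps~2 and~3 are precisely the mechanism for guaranteeing this for every $i$: Step~2 propagates domination that reaches some root around the cycle (if $r_i$ is dominated at distance $\delta_D(r_i)\le k$ and the neighbour's residual tree-height $h(T_{i\pm1})$ fits, then $r_{i\pm 1}$ becomes dominated at distance $\delta_D(r_i)+1$), and Step~3 handles the remaining ``bad'' roots $B$ by constructing, for each $r_i \in B$, the arc $a_i$ of the $2m_i+1$ cycle-vertices from which a chosen dominator would reach all of $T_i$, and then piercing all these arcs. A point of $\P$ in $a_i$ is a cycle vertex $r_j$ with $d_C(r_i,r_j)\le m_i = k-h(T_i)$, hence $\delta_{D^*}(v)\le d_C(r_i,r_j)+h(T_i)\le k$ for all $v\in T_i$. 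So every vertex of $G$ is $k$-dominated: those in trees whose root got dominated in Steps~1--2 directly, and those in trees rooted at $B$ via the pierced arc.

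For the lower bound I would show $|D^*| \le |D_{\mathrm{opt}}|$ for any optimal $k$-hop dominating set $D_{\mathrm{opt}}$ of $G$. The natural route is to invoke Theorem~\ref{thm:Kundu}: $D_{\mathrm{opt}}$ is an optimal $k$-hop dominating set of some spanning tree $T$ of $G$, and $T$ is obtained from $G$ by deleting exactly one cycle edge, say between $r_t$ and $r_{t+1}$. Removing that edge turns $C$ into a path and the whole unicyclic graph into a tree $T$ whose behaviour under Algorithm~\ref{alg:tree} is well understood. I would then set up a charging argument: the dominators of $D_{\mathrm{opt}}$ lying inside a single tree $T_i$ cannot be fewer than $|D_i|$ by the optimality of Algorithm~\ref{alg:tree} on trees (being slightly careful about dominators that sit on the cycle vs. strictly inside a pendant tree — a cycle vertex contributes to $D$ as a root, and the split $D=\bigcup D_i$ already accounts for this), and the dominators that ``help across'' the cycle are exactly captured by the arc/piercing formulation, so their count is at least $|\P|$ by the optimality of the minimum piercing set (Theorem~\ref{thm:arcs}). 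Combining, $|D_{\mathrm{opt}}|\ge |D|+|\P| = |D^*|$. I would also need to check that $D$ and $\P$ are disjoint, or at least that double-counting is avoided, which follows because $\P$ is chosen from $C$ to pierce arcs of roots in $B$, and those roots were specifically the ones not yet dominated by $D$.

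The main obstacle, and the part I would spend the most care on, is the interface between Step~2's greedy propagation around the cycle and Step~3's piercing reduction — specifically proving that after Step~2 the set $B$ of undominated roots, together with the residual heights $h(T_i)$, is such that a minimum piercing set of the arcs $\A$ exactly realizes the minimum number of \emph{additional} dominators needed, with no interaction left between arcs that Step~2 should have resolved. Concretely I must argue that once a root $r_i$ has $\delta_D(r_i)<k$, placing a dominator for some $r_j\in B$ on the cycle never does better by being placed ``through'' $r_i$ in a way the arc model misses; this is where the non-decreasing processing order in Step~2 matters, since it guarantees each root receives its smallest achievable $\delta_D$ value before it is used to update neighbours (a Dijkstra-style correctness argument on the cycle with non-negative unit edge weights, truncated at the height constraints). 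A secondary subtlety is the meaning of ``$B^+$'', the tagged version of $B$: one must confirm that when several roots in $B$ happen to be dominated by the same pierced cycle vertex, the arc construction (each $a_i$ centred at $r_i$ with radius $m_i$) correctly forces a common piercing point to lie within $k-h(T_i)$ of \emph{every} such $r_i$ simultaneously, which it does by definition of arc intersection, so the reduction is tight in both directions. Once these two points are nailed down, the rest is the bookkeeping sketched above.
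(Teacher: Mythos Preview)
Your feasibility argument (that $D^*$ dominates $G$) is fine and matches the paper's reasoning. The lower-bound argument, however, has a real gap that you only half-acknowledge. You propose to charge $|D_{\mathrm{opt}}|$ as $\sum_i |D_{\mathrm{opt}}\cap T_i| \ge \sum_i |D_i| = |D|$ \emph{plus} a separate contribution of at least $|\P|$ from ``dominators that help across the cycle''. But the trees $T_i$ partition $V$, so every vertex of $D_{\mathrm{opt}}$ already sits in exactly one $T_i$ and is already counted once in the first sum; there is no disjoint second pool of ``cycle helpers'' to harvest an extra $|\P|$ from. Concretely, a single $w\in D_{\mathrm{opt}}\cap T_i$ can simultaneously be the forced dominator of a depth-$k$ witness leaf in $T_i$ (contributing to the $|D_i|$ bound) and, via $r_i$, dominate a neighbouring tree $T_j$ with $r_j\in B$ (serving as a ``piercer'' of $a_j$). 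Your remark about checking that ``$D$ and $\P$ are disjoint'' addresses the wrong disjointness: what you need is that the \emph{charges} to $D_{\mathrm{opt}}$ are disjoint, not that $D$ and $\P$ are. The invocation of Theorem~\ref{thm:Kundu} is also a red herring here: you cite it but the charging never uses the spanning tree $T$ you extract.

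The paper takes a different route and avoids this trap. It does not split $D_{\mathrm{opt}}$ at all; instead it argues an exchange/canonical-form statement: among all minimum $k$-hop dominating sets of $G\setminus T$ (where $T=\bigcup_{r_i\in B}T_i$), the set $D$ produced in Steps~1--2 is one whose vertices are \emph{as close to the roots in $B$ as possible}. This ``closest to $B$'' property is the missing ingredient: it guarantees that no other minimum dominator set for $G\setminus T$ can cover more of the residual trees $T_i$, $r_i\in B$, than $D$ already does, so the remaining task is genuinely to pick a minimum subset of $C$ dominating $T$, which is exactly the arc-piercing instance with optimum $|\P|$. To make your proposal go through you would need an exchange argument of this type (push each forced dominator of $D_{\mathrm{opt}}$ up to the corresponding $u\in D$, obtaining $D'_{\mathrm{opt}}\supseteq D$ of the same size, and then argue that $D'_{\mathrm{opt}}\setminus D$ must project to a piercing set for $\A$), rather than the additive charging you sketch.
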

\begin{proof}
After applying Step~1, each $D_i$ is a minimum $k$-hop dominating set of $T_i$. Observe that each vertex added to $D_i$ (in Algorithm~\ref{alg:tree}) is as close to the root $r_i$ as possible. Therefore, among all the minimum $k$-hop dominating sets of $T_i$, $D_i$ is the one that minimizes $\delta_{D_i}(r_i)$. 
Moreover, if $r_i \in D_i$ and $r_i$ was added in Algorithm~\ref{alg:tree} (Line~4), then $r_i$ does not have to be in $D^*$, since $0 \le h(T_i) < k$ and $T_i$ can be dominated by vertices of $G \setminus T_i$ of distance $k-h(T_i)$ from $r_i$. In this case, $\delta_{D_i}(r_i) = \infty$ and we remove $r_i$ from $D_i$ at the end of Step~1. 

In Step~2, we update the distances of the vertices in $\N_k(D)$, where $D = \bigcup_{1 \le i \le m} D_i$.
If $\delta_D(r_i)$ has been updated in Step~2, then all the vertices of $T_i$ are dominated by $D$.
Let $T=\bigcup_{r_i \in B} T_i$ be the set of the subtrees that are not dominated yet. 
At this point, $D$ is a minimum $k$-hop dominating set of $G \setminus T$, such that the vertices of $D$ are as close to the vertices of $B$ as possible among all minimum $k$-hop dominating sets of $G \setminus T$. 
Thus, it remains to select a minimum subset of $C$ that dominates the vertices of $T$. 
By the construction of the arcs in Step~3, each vertex in arc $a_i$ dominates the vertices of $T_i$. 
Hence, the minimum piercing set $\P$ that is computed in Step~3 is a minimum $k$-hop dominating set of $T$. 
Therefore, $D^* = D \cup \P$ is a minimum $k$-hop dominating set of $G$.
\end{proof}

\begin{lemma} \label{lemma:ComputeTimeUnicycle}
Computing $D^*$ takes $O(n)$ time.
\end{lemma}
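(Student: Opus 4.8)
The plan is to bound the running time of each of the three steps separately, showing that each costs $O(n)$, where we write $n=|V|$ and $m=|C|\le n$. First I would record that we may assume $k\le n$: if $k\ge n-1$ then $k$ is at least the diameter of the connected graph $G$, so $\{v\}$ is already a $k$-hop dominating set for any $v\in V$ and we are done in $O(n)$ time; this keeps the bucketed data structures used below over a key range of size $O(n)$. For Step~1, note that deleting the edges of $C$ from $G$ leaves a forest whose components are exactly $T_1,\dots,T_m$, so these trees are vertex-disjoint and $\sum_{i=1}^m|T_i|=n$; by Theorem~\ref{thm:tree}, Algorithm~\ref{alg:tree} runs on $T_i$ in time $O(|T_i|)$ and produces $\delta_{D_i}(r_i)$ and $h(T_i)$ along the way, so the total over all $i$ is $O(n)$.

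For Step~2, I would analyze the process as a multi-source relaxation on the cycle $C$ with unit edge weights and the monotone pruning rule that the value $\delta+1$ is pushed into a neighbour $r_j$ only when $h(T_j)+\delta+1\le k$. The quantities $\delta_D(r_i)$ are non-negative integers (or $\infty$), they are only ever decreased, and $r_i$ can trigger a decrease only at its at most two cycle neighbours, so the total number of relaxations is $O(m)$. To keep the priority-queue operations at amortized $O(1)$ rather than the $O(\log m)$ of a comparison-based heap, I would realise the queue as an array of buckets indexed by the value $\delta_D(\cdot)\in\{0,1,\dots,k\}$, which has $O(k)=O(n)$ buckets by the reduction above; then extract-min, insert and decrease-key are $O(1)$ amortised and the step runs in $O(m+k)=O(n)$ time. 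Equivalently, one can drop the priority queue and instead perform two running-minimum sweeps around $C$, one clockwise and one counter-clockwise, each making two passes over the $m$ cycle vertices while honouring the pruning rule, which also yields $O(m)$. This is the step I expect to be the main obstacle: one must check that the pruning rule never forces a vertex's key to be re-examined more than a constant number of times, and that the bucketed queue (or the cyclic sweep) converges within $O(m)$ work to exactly the $\delta_D$-values that Lemma~\ref{lemma:CorrectnesUnicycle} relies on.

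For Step~3, each $r_i\in B$ has $0\le h(T_i)<k$ and $m_i=k-h(T_i)\ge 1$; the arc $a_i$ consists of the cycle vertices within distance $m_i$ of $r_i$, which I would record simply by its two endpoints $r_{i-m_i}$ and $r_{i+m_i}$ (or by a single flag $a_i=C$ when $2m_i+1\ge m$) in $O(1)$ time, so building $\A$ costs $O(|B|)=O(m)$. Since all endpoints lie among the cycle vertices $r_1,\dots,r_m$, I would bucket-sort the arcs of $\A$ into clockwise order in $O(m)$ time, after which Theorem~\ref{thm:arcs} applies and delivers a minimum piercing set $\P$ of $\A$ in $O(m)$ time; forming $D^*=D\cup\P$ then takes $O(|D|+|\P|)=O(n)$ time. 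Summing the three bounds gives the claimed $O(n)$ running time for computing $D^*$.
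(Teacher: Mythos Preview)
Your proof is correct and follows essentially the same approach as the paper: Step~1 via Theorem~\ref{thm:tree} summed over disjoint trees, Step~2 via a bucketed priority structure indexed by $\delta_D$-value (the paper calls these lists $L_0,\dots,L_{k-1},L_\infty$), and Step~3 by invoking Theorem~\ref{thm:arcs} on the at most $m$ arcs. Your explicit reduction to $k\le n$ to bound the bucket array is a nice point the paper leaves implicit, and your bucket-sort of the arc endpoints is slightly more careful than the paper's remark that the arcs are ``created in sorted order''; otherwise the arguments coincide.
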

\begin{proof}
Step~1 takes linear time (by Theorem~\ref{thm:tree}). 
We implement Step~2 in linear time as follows. We maintain $k+1$ lists $L_0, L_1, \dots, L_{k-1}$, and $L_{\infty}$, such that $r_i \in L_j$ if and only if $\delta_D(r_i) = j$.
We go over the vertices in the lists $L_0, L_1, \dots, L_{k-1}$ in the following way. 
We consider a vertex $u \in L_i$ before a vertex $v\in L_j$ if $i < j$.
Moreover, when a vertex $u \in L_i$ updates $\delta_D(v)$ of a vertex $v$ to $i+1$, we move $v$ from its current list to $L_{i+1}$.

In Step~3, the arcs in $\A$ are created in sorted order in clockwise order in linear time, since the vertices of the cycle $C$ are sorted in clockwise order. 
Thus, by Theorem~\ref{thm:arcs}, computing a minimum piercing set of $\A$ takes $O(|\A|)$ time, where $|\A| \le |C| \le n$. Therefore, $D^*$ can be computed in $O(n)$ time.
\end{proof}

The following theorem follows from Lemma~\ref{lemma:CorrectnesUnicycle} and Lemma~\ref{lemma:ComputeTimeUnicycle}.
\begin{theorem}\label{thm:unicyclic}
Let $G$ be a unicyclic graph on $n$ vertices. Then, for any $k \ge 1$, one can find a minimum $k$-hop dominating set of $G$ in $O(n)$ time.
\end{theorem}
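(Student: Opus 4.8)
The plan is to combine Lemma~\ref{lemma:CorrectnesUnicycle} and Lemma~\ref{lemma:ComputeTimeUnicycle} into a single statement, so the proof is essentially a two-line assembly. First I would invoke Lemma~\ref{lemma:CorrectnesUnicycle} to conclude that the set $D^* = D \cup \P$ produced by the three-step algorithm is a minimum $k$-hop dominating set of $G$; this establishes correctness. Then I would invoke Lemma~\ref{lemma:ComputeTimeUnicycle} to conclude that $D^*$ can be computed in $O(n)$ time. Together these two facts are exactly the assertion of the theorem.

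Since both lemmas are already proved in the excerpt, there is no real mathematical content left; the theorem is a packaging step. If I wanted to make the proof slightly more self-contained, I would briefly recall the structure of the argument: Step~1 applies Theorem~\ref{thm:tree} to each pendant tree $T_i$ hanging off the cycle $C$, Step~2 propagates domination distances around $C$ via the auxiliary lists $L_0,\dots,L_{k-1},L_\infty$, and Step~3 reduces the residual problem of dominating the still-uncovered subtrees to a minimum piercing set of a set $\A$ of sorted circular arcs, solved in linear time by Theorem~\ref{thm:arcs}. Each of these three steps runs in $O(n)$, and the correctness of the reduction in Step~3 is exactly the observation that a vertex lies in arc $a_i$ iff placing a dominator there covers all of $T_i$.

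The main obstacle is not in this proof at all but upstream: the delicate part is verifying in Lemma~\ref{lemma:CorrectnesUnicycle} that the partial solution $D$ obtained after Steps~1--2 is simultaneously (i) minimum for $G$ restricted to the already-dominated part and (ii) extremal in the sense of placing its vertices as close as possible to the uncovered roots $B$, so that an optimal completion of $D$ corresponds to a minimum piercing set of $\A$ and not merely an upper bound on it. Since that work is done in the cited lemma, here I would simply note that the theorem is immediate.

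\begin{proof}
By Lemma~\ref{lemma:CorrectnesUnicycle}, the set $D^*$ computed by the algorithm of Steps~1--3 is a minimum $k$-hop dominating set of $G$, and by Lemma~\ref{lemma:ComputeTimeUnicycle}, $D^*$ is computed in $O(n)$ time. This proves the theorem.
\end{proof}
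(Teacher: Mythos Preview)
Your proposal is correct and matches the paper's approach exactly: the paper states the theorem as an immediate consequence of Lemma~\ref{lemma:CorrectnesUnicycle} and Lemma~\ref{lemma:ComputeTimeUnicycle}, with no additional argument.
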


\subsection{\emph{kHDS} on unicyclic graphs with additional requirements} \label{sec:varUni}
In this section, we consider two variants of the \emph{kHDS} problem on unicyclic graphs. 
In these variants, in addition to a unicyclic graph $G$, we are given a point $\rho \in C$, where $C$ is the cycle of $G$.
In both variants, we apply the \emph{kHDS} algorithm on unicyclic graphs (Section~\ref{sec:unicyclic}) 
with modifications that do not affect its running time.

\paragraph*{Variant-1}
In this variant, the goal is to compute a minimum $k$-hop dominating set $D^*$ of $G$, that minimizes $\delta_{D^*}(\rho)$, among all minimum $k$-hop dominating sets of $G$. 
We apply Steps~1-3 in our algorithm, except that in Step~3, we do not create an arc for $\rho$. Moreover, instead of calling the algorithm of piercing circular arcs (Section~\ref{sec:arcs}), we call the algorithm that solves Variant-1 of piercing circular arcs with additional requirements (Section~\ref{sec:arcsAdd}). 

The correctness of this algorithm follows immediately from Theorem~\ref{thm:unicyclic} and from the correctness of Variant-1 of piercing circular arcs with additional requirements (Section~\ref{sec:arcsAdd}).

\paragraph*{Variant-2}
Let $D^*$ be a minimum $k$-hop dominating set of $G$ and let $m$ be its size.
In this variant, the goal is to compute a $k$-hop dominating set $D$ of $G \setminus \N_k(\rho)$ of size $m-1$, such that the distance from $\rho$ to the farthest vertex in $G \setminus \N_k(D)$ is minimized. 
If there is no $k$-dominating set for $G \setminus \N_k(\rho)$ of size $m-1$, then return $D^*$.
We apply Steps~1-3 in our algorithm, except that, in Step~3, we do not create an arc for $\rho$. Moreover, instead of calling the algorithm of piercing circular arcs (Section~\ref{sec:arcs}), we call the algorithm that solves Variant-2 of piercing circular arcs with additional requirements (Section~\ref{sec:arcsAdd}). 

The correctness of this algorithm follows immediately from Theorem~\ref{thm:unicyclic} and from the correctness of Variant-2 of piercing circular arcs with additional requirements (Section~\ref{sec:arcsAdd}).

\section{Solving \emph{kHDS} on Cactus Graphs} \label{sec:cactus}

In this section, we present a linear-time algorithm that computes a minimum $k$-hop dominating set in cactus graphs. 

Let $G$ be a cactus graph on $n$ vertices and let $C = \{r_1,r_2,\dots,r_m\}$ be a cycle in $G$. 
For each $1 \le i \le m$, let $G_{r_i}$ be the subgraph of $G$ containing $r_i$ and obtained by removing the edges $(r_{i-1}, r_i)$ and $(r_i, r_{i+1})$ from $G$. 
In Algorithm~\ref{alg:cactus}, we first compute a minimum $k$-hop dominating set $D_i$ for each $G_{r_i}$, such $\delta_{D_i}(r_i)$ is minimized, among all minimum $k$-hop dominating sets for $G_{r_i}$.
Then, we compute a minimum $k$-hop dominating set $D$ of the remaining (unicyclic) graph by applying the \emph{kHDS} algorithm that solves the problem in unicyclic graph (Section~\ref{sec:unicyclic}), and we return $D^* = \bigcup_i D_i \cup D$.
\floatname{algorithm}{Algorithm}
\begin{algorithm}[htb]
\caption{\emph{Solve-kHDS-on-Cactus}($G$, $C$)} \label{alg:cactus}

\textbf{Input:} A cactus graph $G$ and a cycle $C$ in it \\
\textbf{Output:} A minimum $k$-hop dominating set $D^*$ for $G$

\begin{algorithmic}[1]
\STATE \textbf{for} each $r_i \in C$ \textbf{do} \\
\quad \ $L_i \leftarrow$  \emph{DFS-Based}($G_{r_i}, r_i$) \\
\quad \ $D_i \leftarrow \emptyset$ 

\STATE \textbf{for} each $r_i \in C$ \textbf{do} \\ 
\quad \  \textbf{if}  $L_i = \emptyset$  \textbf{then}    \hfill {(* $G_{r_i}$ is a tree*)} \\
 \quad \ \quad \   $D_i \leftarrow$ \emph{Solve-kHDS-on-Tree}$(G_{r_i})$ \hfill {(* Algorithm 1, with $r_i$ as the root of $G_{r_i}$ *)} \\
 \quad \  \textbf{else} \\
\quad \ \quad \ \textbf{for} each $\rho \in L_i$ \textbf{do}  \hfill {(* Consider the vertices in sorted order *)} \\
\quad \ \quad \ \quad \ $G_{\rho} \leftarrow $ the subgarph of $G_{r_i}$ obtained by removing the edge $(\rho, \rho.\pi)$ and contains~$\rho$ \\ 
\quad \ \quad \ \quad \ $D_{i} \leftarrow \ D_i \ \cup $ \emph{Solve-Special-kHDS-on-Unicycle}$(G_{\rho}, \rho)$ \\
 \STATE  $D^* \leftarrow \ \bigcup_i D_i$ \\

 \STATE $D \leftarrow$  call solving \emph{kHDS} on unicyclic graphs with the updated graph $G$ (Section~\ref{sec:unicyclic})
 \STATE  $D^* \leftarrow  D^* \cup D$

\STATE \textbf{return} $D^*$
 	
\end{algorithmic}
\end{algorithm}

Given a vertex $r_i \in C$, we compute a minimum $k$-hop dominating set $D_i$ for $G_{r_i}$ as follows.
If $G_{r_i}$ is a tree, then we compute $D_i$ using Algorithm~\ref{alg:tree} (Section~\ref{sec:tree}).
Otherwise, $G_{r_i}$ contains a cycle. 
We perform a DFS-based scan (Algorithm~\ref{alg:dfsbased}) on $G_{r_i}$ that starts from $r_i$ and finds all the cycles of $G_{r_i}$. 
That is, every back-edge to a vertex $u$, found in the DFS-based algorithm, corresponds to a cycle containing $u$ as its representative. We create an ordered list $L_i$ of all the vertices having back-edges in $G_{r_i}$, such that vertex $u$ appears before vertex $v$ in $L_i$, if and only if the finishing time of $u$ in the DFS-base algorithm is less than the finishing time of vertex $v$.     

For a vertex $u \in L_i$, let $G_u$ be the subgraph of $G_{r_i}$ containing $u$ and obtained by removing the edge $(u, u.\pi)$ from $G_{r_i}$.
For each vertex $\rho \in L_i$ (in sorted order), we compute a (nearly) minimum $k$-hop dominating set for $G_{\rho}$ (using Algorithm~\ref{alg:specUni}) as follows. Let $\C$ be the set of cycles of $G_{\rho}$ obtained by the back-edges to vertex $\rho$ in the DFS-Based algorithm. For each cycle $C \in \C$, let $C^+$ be the unicyclic graph corresponding to $C$. 
We first compute a minimum $k$-hop dominating set $D_1$ of $C^+$ that minimizes $\delta_{D_1}(\rho)$, among all minimum $k$-hop dominating sets of $C^+$ (using Variant-1 of solving \emph{kHDS} on unicyclic graphs with additional requirements, Section~\ref{sec:varUni}). Then, we check whether there exists a $k$-hop dominating set of $C^+ \setminus \N_k(\rho)$ of size $|D_1| - 1$. 
\begin{itemize}
	\item If there exists such a set, then we compute a $k$-hop dominating $D_2$ set of $C^+ \setminus \N_k(\rho)$ of size $|D_1| - 1$, such that the distance from $\rho$ to the farthest vertex in $C^+ \setminus \N_k(D_2)$ is minimized (using Variant-2 of solving \emph{kHDS} on unicyclic graphs with additional requirements, Section~\ref{sec:varUni}). Let $u$ be the farthest vertex from $\rho$ in $C^+ \setminus \N_k(D_2)$. We replace $C^+$ in $G$ by the path from $\rho$ to $u$, and we add $D_2$ to $D_i$.
	\item Otherwise, let $i = \delta_{D_1}(\rho)$ and let $u$ be a vertex in $D_1$ of distance $i$ from $\rho$. We replace $C^+$ in $G$ by the path from $\rho$ to $u$ and a new path of length $k$ incident to $u$; see Figure~\ref{fig:CaseOfvariant1}. Moreover, we add $D_1 \setminus \{u\}$ to $D_i$.
\end{itemize}
\begin{figure}[hbt]
   \centering
       \includegraphics[width=0.8\textwidth]{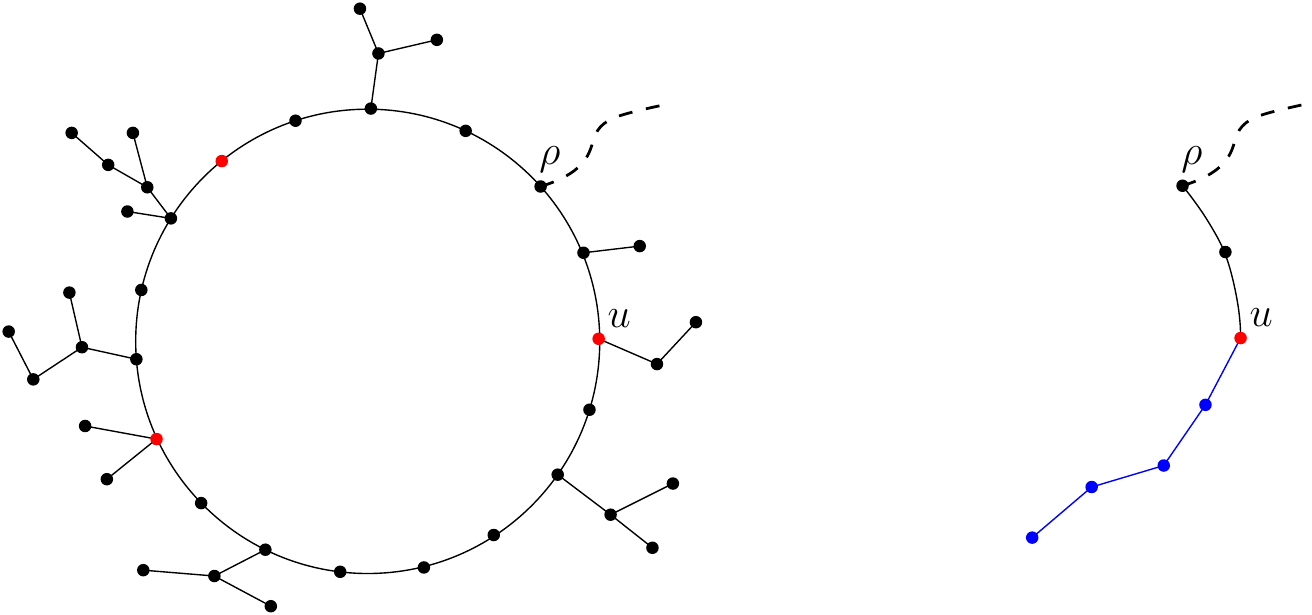}
   \caption{Left, a unicyclic graph $C^+$ in which there is no $4$-dominating set for $C^+ \setminus \N_4(\rho)$ of size $m-1$ (where $m=3$). The vertices of the $4$-dominating set are depicted in red.  
   Right, we replace $C^+$ in $G$ by the path from $\rho$ to $u$ and a new path of length $4$
 attached to $u$,  depicted in blue.  }              
   \label{fig:CaseOfvariant1}
\end{figure}

\floatname{algorithm}{Procedure}
\begin{algorithm}[hbt]
\caption{\emph{Solve-Special-kHDS-on-Unicycle}($G_{\rho}, \rho$)} \label{alg:specUni}

\textbf{Input:} A graph $G_{\rho}$ and a point $\rho \in G_{\rho}$ \\
\textbf{Output:} A (nearly) $k$-hop dominating set for $G_{\rho}$

\begin{algorithmic}[1]
\STATE $\C \leftarrow$ the set of cycles of $G_{\rho}$ obtained by the back-edges in DFS-Based algorithm

\textbf{for} each $C \in \C$  \textbf{do} \\
\quad \ $C^+ \leftarrow$ the unicyclic graph corresponding to $C$ \\
\quad \ $D_1 \leftarrow$ the $k$-hop dominating set computed by Variant-1 of solving \emph{kHDS} on unicyclic \\
\quad \ \quad \ \quad \ \ graphs with additional requirements (Section~\ref{sec:varUni}) with $C^+$ and $\rho$   \\
\quad \ $D_2 \leftarrow$ the $k$-hop dominating set computed by Variant-2 of solving \emph{kHDS} on unicyclic \\
  \quad \ \quad \ \quad \ \  graphs with additional requirements (Section~\ref{sec:varUni}) with $C^+$ and $\rho$ \\

\quad \ \textbf{if}  $|D_2| = |D_1| - 1$ \textbf{then} \\ 
\quad \ \quad \ $u \leftarrow $ the farthest vertex in $C^+ \setminus \N_k(D_2)$ \\
\quad \ \quad \ replace $C^+$ in $G$ by the path from $\rho$ to $u$ \\
\quad \ \quad \ \textbf{return} $D_2$ \\

\quad \ \textbf{else} ($|D_1| = |D_2|$)  \\
\quad \ \quad \  $i \leftarrow \delta_{D_1}(\rho)$ \\
\quad \ \quad \  $u \leftarrow $ a vertex in $D_1$ of distance $i$ from $\rho$ \\
\quad \ \quad \  replace $C^+$ in $G$ by the path from $\rho$ to $u$ and a new path of length $k$ incident to $u$ \\
\quad \ \quad \  $D_1 \leftarrow D_1 \setminus \{u\}$ \\
\quad \ \quad \ \textbf{return} $D_1$ \\

\end{algorithmic}
\end{algorithm}


\floatname{algorithm}{Algorithm}
\begin{algorithm}[hbt]
\caption{\emph{DFS-Based}($G=(V,E)$, $r$)} \label{alg:dfsbased}

\textbf{Input:} A cactus graph $G$ and a source vertex $r$  \\
\textbf{Output:} A list $L$ of vertices that have a back-edge sorted by their finishing time  \\
  \quad \ (the time they are colored black)

\begin{algorithmic}[1]

\STATE $L \leftarrow \emptyset$ \\ 
\STATE \textbf{for} each $u \in V$ \textbf{do} \\
\quad \   $u.color \leftarrow \textsc{white} $ \\
\quad \   $u.\pi  \leftarrow \textsc{null}$\\
\quad \   $u.backEdge \leftarrow \textsc{false} $\\

\STATE  \emph{DFS-visit}($G$, $r$, $L$) \\
\STATE \textbf{return} $L$

\end{algorithmic}


\underline{\textcolor{white}{zzzzzzZZZZZZZZZZZZZZZZZZZZZZZZZZZZZZZZZZZZZZZZZZZZZZZZZZZZZZ}} \\
\underline{\fontsize{9}{12.5}{\sffamily\bfseries Procedure} \emph{DFS-vist}$(G, u, L)$ \textcolor{white}{zzzzzzzzZZZZZZZZZZZZZZZZZZZZZZZZZZZZZZZZZZZ}} 
\begin{algorithmic}[1]

\STATE $u.color  \leftarrow \textsc{gray}$ \\
\STATE \textbf{for} each $v \in G.Adj[u]$ \textbf{do}   \quad \ (* $G.Adj[u]$ is the list of neighbors of $u$ *)\\
\quad \ \textbf{if} $v.color = \textsc{white} $ \textbf{then}  \\
\quad \ \quad \ $v.\pi \leftarrow u$ \\
\quad \ \quad \  \emph{DFS-visit}($G$, $v$, $L$) \\
\quad \ \textbf{else} ($v.color = \textsc{gray}$) \\
  \quad \ \quad \  $v.backEdge \leftarrow \textsc{true}$ 

\STATE \textbf{if} $u.backEdge = \textsc{true} \ $ \textbf{then}  \\
\quad \  add $u$ to the end of $L$ \\

\STATE $u.color \leftarrow \textsc{\textsc{black}}$\\

\end{algorithmic}
\end{algorithm}

\begin{lemma} \label{lemma:CorrectnesCactus}
$D^*$ is a minimum $k$-hop dominating set of $G$.
\end{lemma}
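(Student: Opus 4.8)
The plan is to prove the statement by induction on the number of cycles of $G$. Algorithm~\ref{alg:cactus} fixes one cycle $C$ of $G$ and then ``peels off'' the pendant cactus structure hanging from the vertices of $C$, one nested cycle at a time, in the order of the DFS finishing times recorded in the lists $L_i$; what is left at the end is a unicyclic graph whose cycle is $C$, and the last two lines invoke the routine of Section~\ref{sec:unicyclic} on it. For the base cases, a tree is handled correctly by Theorem~\ref{thm:tree} and a unicyclic graph by Theorem~\ref{thm:unicyclic}, so the whole proof reduces to establishing the soundness of one peeling step and checking that the steps compose.

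The core of the argument is a \emph{reduction lemma}. Fix a representative $\rho \in L_i$ currently being processed and a cycle $C \in \C$ incident to $\rho$, with corresponding unicyclic subgraph $C^+$; this is well defined because, by the processing order, every cycle nested strictly below $\rho$ has already been contracted into a tree. Write $G$ as the union of $C^+$ and the remaining graph $G'$ glued at the cut vertex $\rho$, so $V(C^+)\cap V(G')=\{\rho\}$, and let $D_1$, $D_2$, $u$, the committed set $\widehat D$ (equal to $D_2$, or to $D_1\setminus\{u\}$), and the replacement graph $\widehat G$ be as in Procedure~\ref{alg:specUni}. The reduction lemma asserts $\gamma_k(G) = |\widehat D| + \gamma_k(\widehat G)$, together with the refinement that among the minimum $k$-hop dominating sets of $\widehat G$ there is one that is closest to $\rho$ and that, after adjoining $\widehat D$, is a minimum $k$-hop dominating set of $G$, and conversely. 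Granting this lemma, chaining it over all $\rho \in L_i$ and all $r_i \in C$ — using that the subgraphs $G_{r_i}$ are pairwise vertex-disjoint, and that the pieces committed inside one $G_{r_i}$ occupy disjoint portions of it once the earlier cycles have been contracted — reduces $G$ to a unicyclic graph; one final application of Theorem~\ref{thm:unicyclic} to that graph, adding back the committed cardinalities, produces $\gamma_k(G)$ and an optimal set, which is precisely $D^*$.

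To prove the reduction lemma I would run an exchange argument with two cases. The starting observation is the inequality $\gamma_k(C^+\setminus\N_k(\rho)) \ge |D_1|-1$: any $k$-hop dominating set of $C^+\setminus\N_k(\rho)$ together with $\rho$ dominates all of $C^+$, hence has size at least $\gamma_k(C^+)-1 = |D_1|-1$. Moreover, in any $k$-hop dominating set $D$ of $G$ the vertices of $C^+\setminus\N_k(\rho)$ can only be dominated from inside $C^+\setminus\{\rho\}$, since a vertex of $G'$ at distance $d\ge 1$ from $\rho$ (or $\rho$ itself) reaches, inside $C^+$, only $\N_k(\rho)$; so $D$ contains at least $|D_1|-1$ vertices of $C^+\setminus\{\rho\}$, and a surplus beyond $|D_1|$ can be traded down because $D_1$ already dominates all of $C^+$ and, by Variant-1 of Section~\ref{sec:varUni}, is as close to $\rho$ as any minimum-size dominating set of $C^+$. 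If some optimal $D$ uses exactly $|D_1|-1$ such vertices, the part of $C^+$ they leave undominated lies within distance $k$ of $\rho$ and is covered from $G'$; Variant-2 of Section~\ref{sec:varUni} produces exactly a size-$(|D_1|-1)$ dominating set $D_2$ of $C^+\setminus\N_k(\rho)$ that minimizes the distance from $\rho$ to its farthest undominated vertex $u$, so replacing $C^+$ by the path from $\rho$ to $u$ records exactly the residual demand on $G'$ and committing $D_2$ is valid. Otherwise — detected by $|D_2|=|D_1|$ — every optimal $D$ uses at least $|D_1|$ vertices inside $C^+$; we commit the $|D_1|-1$ vertices $D_1\setminus\{u\}$ with $u$ the one at distance $\delta_{D_1}(\rho)$ from $\rho$, and replacing $C^+$ by the path from $\rho$ to $u$ together with a fresh pendant path of length $k$ attached at $u$ forces exactly one further dominator, which in $\widehat G$ may be taken at $u$ and whose reach $k-\delta_{D_1}(\rho)$ into $G'$ matches that of $u\in D_1$ in $G$. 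In both cases the ``closest to $\rho$'' clause is preserved because Algorithm~\ref{alg:tree} and the unicyclic routine always place their chosen vertices as near the relevant cut vertices as possible, exactly as exploited in the proof of Lemma~\ref{lemma:CorrectnesUnicycle}.

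The step I expect to be the main obstacle is making this exchange argument airtight and, above all, showing that it \emph{composes}: one must verify that after a replacement the ``dominators pushed maximally toward $\rho$'' invariant still holds in $\widehat G$, that a vertex of $D$ lying inside $C^+$ but near $\rho$ that simultaneously helps dominate $G'$ is accounted for by one of the two cases without loss (this is exactly what Variant-1 and Variant-2 are engineered to guarantee), and that greedily committing these locally optimal, $\rho$-closest partial solutions can never preclude a globally optimal one. The remaining ingredients — well-definedness of $C^+$ from the processing order, the disjointness bookkeeping of the committed pieces (each vertex of $G$ being charged at most once between $\bigcup_i D_i$ and the final unicyclic call), and the fact that the fully peeled graph is genuinely unicyclic — are routine.
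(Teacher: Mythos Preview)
Your proposal is correct and takes essentially the same approach as the paper: induction on the number of cycles, with the inductive step carried out by an exchange argument on the first processed unicyclic piece $C^+$, comparing the algorithm's committed set against $OPT\cap C^+$ and splitting into cases according to whether Variant-2 yields a set of size $|D_1|-1$. The paper presents the case analysis as four cases (also conditioning on whether $C^+\subseteq \N_k(D^*_{C^+})$), but these collapse to the two branches of Procedure~\ref{alg:specUni} that you identify, and your reduction-lemma packaging and explicit inequality $\gamma_k(C^+\setminus\N_k(\rho))\ge |D_1|-1$ are, if anything, a cleaner articulation of the same argument at roughly the same level of rigor.
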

\begin{proof}
The proof is by induction on the number of cycles in $G$.
Let $c$ be the number of cycles in $G$. 
To start the induction, if $c=0$, then, by Theorem~\ref{thm:tree}, the lemma holds, 
and if $c=1$, then, by Theorem~\ref{thm:unicyclic}, the lemma holds. 

Let $OPT$ be a minimum $k$-hop dominating set of $G$.
Let $\rho$ be the first vertex in list $L_1$, that is returned from 
Algorithm \emph{DFS-based} (Algorithm~\ref{alg:dfsbased}). 
Let $C$ be the first cycle considered by \emph{Solve-Special-kHDS-on-Unicycle}($G_{\rho}, \rho$) 
(Algorithm~\ref{alg:specUni}), and let  $C^+$ be the unicyclic graph corresponding to $C$ in $G_{\rho}$. 
Let $OPT_{C^+} = OPT \cap C^+$, and let $m_{opt}=|OPT_{C^+}|$.
Let $D^*_{C^+} = D^* \cap C^+$, and let $m_D =| D^*_{C^+}|$.
From the correctness of  Variant-1 and Variant-2 of Section~\ref{sec:varUni}, we have  $m_D \le m_{opt}$.
Therefore, from the optimality of $OPT$, we have either $m_D = m_{opt}$ or $m_D = m_{opt} -1 $.  
We distinguish between the following cases:
	\begin{itemize}
		\item $C^+ \subseteq  N_k(D^*_{C^+})$ and  $m_D = m_{opt}$. \\
			By the correctness of Variant-1 of Section~\ref{sec:varUni}, we have 
	    $\delta_{D^*_{C^+}}(\rho) \le  \delta_{OPT_{C^+}}(\rho)$.  
			Thus, $\N_k(OPT_{C^+})  \subseteq \N_k(D^*_{C^+})$.
      Let $u$ be the closest vertex to $\rho$ in $D^*_{C^+}$. 
			We remove $u$ from $D^*_{C^+}$, and replace $C^+$ in $G$ by the path from $\rho$ to $u$ and a new path 
			of length $k$ incident to $u$.
			This procedure guaranties that $u$ will be selected again to $D^*$, and $G$ has one less cycle.
					Therefore, by the induction hypothesis, the set $ D^* \setminus D^*_{C^+}$ is a minimum $k$-hop 
					dominating set for the remaining graph $G \setminus \N_k(D^*_{C^+})$.
			\item $C^+ \subseteq  N_k(D^*_{C^+})$ and  $m_D = m_{opt} - 1$. \\ 
    		 Let $u$ be the farthest vertex from $\rho$ in $C^+ \setminus \N_k(D^*_{C^+})$. 
				 Notice that any vertex not in $C^+$ that $k$-dominates $u$ also $k$-dominates all vertices of 
				$C^+ \setminus \N_k(D^*_{C^+})$.
         We  replace $C^+$ in $G$ by the path from $\rho$ to $u$. Thus, $G$ has one less cycle.
				 By adding $\rho$ to $D^*_{C^+}$, we have $\N_k(OPT_{C^+})  \subseteq \N_k(D^*_{C^+})$.
				 Therefore, by the induction hypothesis, the set $ D^* \setminus D^*_{C^+}$ is a minimum $k$-hop dominating 
				 set for the remaining graph $G \setminus \N_k(D^*_{C^+})$.	
		\item $C^+ \nsubseteq  N_k(D^*_{C^+})$ and  $m_D = m_{opt}$. \\
        		Let $u$ be the farthest vertex from $\rho$ in $C^+ \setminus \N_k(D^*_{C^+})$.
						Notice that any vertex not in $C^+$ that $k$-dominates $u$ $k$-dominates all vertices of $C^+ \setminus \N_k(D^*_{C^+})$.
		        We  replace $C^+$ in $G$ by the path from $\rho$ to $u$. Thus, $G$ has one less cycle.
						By the correctness of Variant-2 of Section~\ref{sec:varUni}, $OPT$ also needs to select a vertex from $G$
						of distance at most $k$ from $u$.
						Therefore, by the induction hypothesis, the set $ D^* \setminus D^*_{C^+}$ is a minimum $k$-hop dominating 
					  set for the remaining graph $G \setminus \N_k(D^*_{C^+})$.
		\item $C^+ \nsubseteq  N_k(D^*_{C^+})$ and  $m_D = m_{opt} -1 $. \\
    	      By the optimality of $OPT$, this case could not occur. 
			\end{itemize}		
\end{proof}

\begin{lemma} \label{lemma:ComputeTimeCactus}
Computing $D^*$ takes $O(n)$ time.
\end{lemma}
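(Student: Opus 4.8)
The plan is to split the running time of Algorithm~\ref{alg:cactus} into four contributions and bound each by $O(n)$: (i) the \emph{DFS-Based} scans of Line~1; (ii) the calls to \emph{Solve-kHDS-on-Tree} in the ``$L_i=\emptyset$'' branch; (iii) the calls to \emph{Solve-Special-kHDS-on-Unicycle} in the ``else'' branch, together with the graph modifications they perform; and (iv) the single unicyclic call of Line~4 (and the trivial set union of Lines~5--6). Contributions (i) and (ii) are the easy ones: for the designated cycle $C=\{r_1,\dots,r_m\}$, the subgraphs $G_{r_1},\dots,G_{r_m}$ are pairwise vertex-disjoint and $\bigcup_i V(G_{r_i})=V(G)$, because deleting both cycle-edges incident to $r_i$ disconnects $r_i$ together with all branches of $G$ attached to $r_i$ off $C$ from the rest of $C$; hence $\sum_i |V(G_{r_i})|=n$, the \emph{DFS-Based} scan of $G_{r_i}$ runs in $O(|V(G_{r_i})|)$ time, and the call \emph{Solve-kHDS-on-Tree}$(G_{r_i})$, made at most once, runs in $O(|V(G_{r_i})|)$ time by Theorem~\ref{thm:tree}, so (i) and (ii) together are $O(n)$.

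The heart of the matter is contribution (iii). First I would prove the structural fact that, over the whole execution, every cycle of $G$ other than $C$ is the distinguished cycle of exactly one unicyclic gadget $C^{+}$ examined by \emph{Solve-Special-kHDS-on-Unicycle}: representatives are processed in nondecreasing finishing-time order, so the cycles handled in the call for $\rho$ are precisely the currently innermost cycles hanging off $\rho$, and once a gadget is processed its cycle is deleted from $G$ and replaced by a path, so it is never revisited. Each processing of a gadget $C^{+}$ invokes Variant-1 and Variant-2 of Section~\ref{sec:varUni} on $C^{+}$, each of which runs in $O(|V(C^{+})|)$ time by Theorem~\ref{thm:unicyclic} together with the observation (Section~\ref{sec:varUni}) that the modifications do not change the running time. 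So it remains to show $\sum_{C^{+}}|V(C^{+})|=O(n)$, the sum ranging over all gadgets processed.

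This last bound is the \textbf{main obstacle}, because the naive count does not give $O(n)$: a gadget $C^{+}$ consists of its distinguished cycle $C$ (here $\sum_C|V(C)|=\sum_C|E(C)|\le|E(G)|=O(n)$, since distinct cycles of a cactus are edge-disjoint blocks), the trees of $G$ hanging off $C$, and the replacement objects produced by earlier descendant cycles --- and both the trees and, above all, the pendant paths of length $k$ added in the ``Otherwise'' branch can be re-read across the (unbounded) nesting depth of cycles, so a literal implementation could spend $\omega(n)$ time. My proposed fix is to keep the whole working graph in a compressed form in which every maximal path of degree-$2$ vertices is stored as a single length-annotated edge --- in particular each length-$k$ pendant and each surviving $\rho$-to-$u$ path becomes one annotated edge --- and to adapt the tree subroutine of Section~\ref{sec:tree} and the unicyclic subroutine of Section~\ref{sec:unicyclic} so that they operate on this representation at no asymptotic cost (a path of length $\ell$ is processed in $O(1)$ time via its annotation). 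With this representation, the compressed size of a gadget $C^{+}$ is $O(|V(C)| + (\text{number of branch vertices and leaves of its trees}) + (\text{number of its child cycles}))$, and summing over all gadgets --- using the cycle edge-disjointness above, the fact that a cactus has $O(n)$ leaves and $O(n)$ blocks, and that each child cycle is absorbed by a unique parent --- yields $\sum_{C^{+}}|V(C^{+})|=O(n)$. The same representation leaves the graph after the double loop a (compressed) unicyclic graph of size $O(n)$, so contribution (iv) is $O(n)$ by Theorem~\ref{thm:unicyclic} and Lines~5--6 are $O(n)$. Adding (i)--(iv) gives the claimed $O(n)$ bound; carrying out the amortized analysis of the compressed representation, and checking that every operation (building $C^{+}$, running the two variants, performing the replacement) respects it in amortized $O(1)$ time per compressed element, is the step I expect to require the most care.
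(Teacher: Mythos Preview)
Your four-way decomposition is exactly the skeleton of the paper's own proof, which is much terser: it simply asserts that the \emph{DFS-Based} scans, the tree calls, and the Variant-1/Variant-2 unicyclic calls each take linear time in their input and $O(n)$ ``in total,'' and that the final unicyclic call is linear. For contributions (i), (ii), and (iv) your argument and the paper's coincide.

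Where you diverge is contribution (iii). You explicitly raise --- and the paper does not --- the issue that the replacement objects produced by \emph{Solve-Special-kHDS-on-Unicycle} (the $\rho$-to-$u$ paths and, especially, the fresh length-$k$ pendants in the ``Otherwise'' branch) become part of the parent cycle's gadget $C^{+}$ and of the final updated unicyclic graph, so a literal reading of the algorithm gives $\sum_{C^{+}}|V(C^{+})|=O(nk)$ rather than $O(n)$. The paper's proof offers no argument here; it just writes ``$O(n)$ time in total.'' Your compressed path-contraction representation is a legitimate way to close this gap, and your amortization sketch (edge-disjointness of cactus cycles, $O(n)$ leaves and blocks, each child cycle absorbed once) is the right accounting. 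Note, however, that this is a modification of the algorithm's data representation, not merely an alternative analysis of the algorithm as written; you should flag that distinction, and you should verify that the two Variants of Section~\ref{sec:varUni} --- in particular the arc construction of Step~3 and the distance bookkeeping of Step~2 --- still run in time proportional to the \emph{compressed} size when arcs may span length-annotated super-edges.
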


\begin{proof}
We show that each step in the algorithm takes linear time.
Performing \emph{DFS-Based} on $G_{r_i}$ takes $O(|G_{r_i}|)$ time, for each $r_i \in C$, and $O(n)$ time in total.
Computing a minimum $k$-hop dominating set in trees takes linear time in the size of the tree, and $O(n)$ time in total.
Computing a minimum $k$-hop dominating set in unicyclic graphs takes linear time in the size of the graph.
Computing a minimum $k$-hop dominating set in Variant-1 and Variant-2 of the $k$-hop dominating set in unicyclic graphs takes linear time in the size of the graph, and  $O(n)$ time in total.
\end{proof} 

The following theorem follows from Lemma $\ref{lemma:ComputeTimeCactus}$ and Lemma $\ref{lemma:CorrectnesCactus}$:
\begin{theorem}
Let $G$ be a cactus graph on $n$ vertices. Then, for any $k \ge 1 $, one can compute a minimum $k$-hop dominating set of $G$ in $O(n)$ time.
\end{theorem}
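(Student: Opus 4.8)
The statement follows directly from the two preceding lemmas, so the plan is simply to combine them. First I would invoke Lemma~\ref{lemma:CorrectnesCactus} to guarantee that the set $D^*$ returned by Algorithm~\ref{alg:cactus} is indeed a minimum $k$-hop dominating set of $G$; this is the correctness component. Then I would invoke Lemma~\ref{lemma:ComputeTimeCactus} to guarantee that $D^*$ is computed in $O(n)$ time; this is the efficiency component. Putting the two together yields exactly the claim: for any $k \ge 1$, one can compute a minimum $k$-hop dominating set of $G$ in $O(n)$ time.

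The only subtlety worth spelling out is that Algorithm~\ref{alg:cactus} takes a cycle $C$ of $G$ as input, whereas the theorem merely assumes $G$ is a cactus graph. I would address this by noting that a cycle of $G$ (if one exists) can be located in $O(n)$ time by a depth-first search that detects a back-edge — indeed this is exactly what Algorithm~\ref{alg:dfsbased} does — and if $G$ has no cycle at all, then $G$ is a tree and Theorem~\ref{thm:tree} already gives the result. So the $O(n)$ preprocessing to find $C$ does not affect the asymptotic bound, and the hypotheses of Lemmas~\ref{lemma:CorrectnesCactus} and~\ref{lemma:ComputeTimeCactus} are met.

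I do not anticipate a genuine obstacle here, since all the real work — the inductive correctness argument over the number of cycles, and the per-step linear-time accounting — has already been carried out in Lemmas~\ref{lemma:CorrectnesCactus} and~\ref{lemma:ComputeTimeCactus}, which themselves rest on Theorem~\ref{thm:tree}, Theorem~\ref{thm:unicyclic}, and the two variants of Section~\ref{sec:varUni}. The one place I would be slightly careful is to make sure the claimed ``$O(n)$ time in total'' over all the recursive calls on the subgraphs $G_{\rho}$ is justified by a clean amortization: each vertex and edge of $G$ belongs to the unicyclic pieces processed across the recursion a bounded number of times, because every time a cycle $C^+$ is handled it is permanently replaced in $G$ by a path, strictly decreasing the cycle count, and the replacement path has size $O(|C^+|)$. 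Hence the total size of all graphs ever passed to the unicyclic subroutine is $O(n)$, and Theorem~\ref{thm:unicyclic} (together with its variants) gives the bound. With that observation in hand the theorem is immediate.
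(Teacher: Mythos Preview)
Your proposal is correct and matches the paper's approach exactly: the paper simply states that the theorem follows from Lemma~\ref{lemma:CorrectnesCactus} and Lemma~\ref{lemma:ComputeTimeCactus}, which is precisely what you do. Your additional remarks about locating an initial cycle via DFS and about the amortized $O(n)$ cost over all unicyclic subcalls are more explicit than what the paper writes, but they are consistent with (and indeed clarify) the intended argument.
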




\end{document}